\tikzset{
  dot node/.style={
    shape=circle,
    fill=white,
    draw,
    inner sep=+0pt,
    minimum size=+7mm
  },
  dotdot node/.style 2 args={
    dot node,
    label={[shape=circle,fill=black,outer sep=+0pt,inner sep=+0pt,minimum size=+3mm,name=ddd-#1,#2]center:}
  },
  arc style/.style={
    |<->|,
    shorten >=+-.5\pgflinewidth,
    shorten <=+-.5\pgflinewidth,
  }
}
\newtheorem{example}{Example}
\newtheorem{definition}{Definition}
\newtheorem{lemma}{Lemma}
\newtheorem{obs}{Observation}
\newtheorem{theorem}{Theorem}
\newtheorem{proposition}{Proposition}
\newenvironment{proof}{\noindent{\bf Proof.}}{\hfill \qed \vskip 5pt}
\def\qed{\hfill\rule{2mm}{2mm}}
\date{}
\newcommand{\G}{{\mathcal G}}
\newcommand{\ML}{{\sf Min-Links }}
\def\proof{\noindent{\bf Proof. $\ $}}
\def\qed{\hfill $\Box$}
\def\s{{\sigma}}
\def\d{{\delta}}
\def\bs{{\bf s}}
\def\cs{s}
\newtheorem{prop}{Proposition}
\newcommand{\red}[1]{#1}
\newcommand{\remove}[1]{}
\def\bs{{\bf s}}
\def\cs{s}
\newcommand{\Active}{{{ I}}}
\def\G{{{\cal G}}}
\def\vj{{v_j}}
\def\D{{\rm  B}}
\begin{document}


\thispagestyle{empty}

\title{\bf Whom to befriend to influence people
} %
\author{Gennaro Cordasco\thanks{Department of Psychology, Universit\`a degli Studi della Campania ``Luigi Vanvitelli'', Italy}
\and
Luisa Gargano\thanks{Department of Computer Science, Universit\`a degli Studi di Salerno, Italy}
\and
Manuel Lafond\thanks{
Department of Computer Science and Operational Research, Universit\'e de Montr\'eal, Montr\'eal, Canada}
 \and 
Lata Narayanan\thanks{Department of Computer Science and Software Engineering, Concordia University, Montr\'eal, Canada}
\and 
Adele~A.~Rescigno\footnotemark[2]
\and 
Ugo Vaccaro\footnotemark[2]
 \and 
Kangkang Wu \footnotemark[4]
}                      

\maketitle


\begin{abstract}
Alice wants to join a new social network, and influence its members to adopt a new product 
or idea. Each person $v$ in the network has a certain threshold $t(v)$ for {\em activation}, 
i.e adoption of the product or idea. If  $v$ has at least $t(v)$ activated neighbors,  then 
$v$ will also become activated. If Alice wants to activate the entire social network, whom should she befriend? 
More generally,  we study the problem of finding the minimum number of links that  a set of  external 
influencers should form to people  in the network, in order to activate the entire social network.  
This  {\em Minimum Links}  Problem  has applications in viral marketing and the study of epidemics.  
Its solution can be quite different from the related and widely studied Target Set Selection problem. 
We prove that the Minimum Links problem cannot be approximated to within a ratio 
of  $O(2^{\log^{1-\epsilon} n})$,  for any fixed $\epsilon>0$, 
unless $NP\subseteq DTIME(n^{polylog(n)})$, where $n$ is the number of nodes
in the network. On the positive side, we give linear time algorithms to solve the problem for trees, 
cycles, and cliques, for any given set of  external influencers, and give precise bounds on the number of links needed. 
For general graphs,  we design  a polynomial time algorithm to  compute  size-efficient link sets
that can activate the entire graph.
\end{abstract}
%
\newcommand{\ac}{\sim}
\newcommand{\extinf}{\mu}

%
%


\section{Introduction}

The increasing popularity and proliferation of large online social networks, together with the availability of enormous amounts of data about customer bases,  has contributed to the rise of {\em viral marketing }  as an effective 
strategy in promoting 
new products or ideas. This strategy relies on the insight that once a certain 
fraction of a social network adopts a product, a larger cascade of further 
adoptions is predictable due to the {\em word-of-mouth}  network effect 
\cite{GLM01,LAH07,BR87}.
Inspired by social networks and viral marketing, Domingos and Richardson 
\cite{DR-01,RD02}  were the first to raise the following
important algorithmic problem in the context 
of social network analysis: 
If a company can turn a subset of customers in a given network into 
early adopters, and the 
goal is to trigger a large cascade of further adoptions, 
which set of customers should they target?

We use the well-known threshold model to study the influence diffusion process in social networks from an algorithmic perspective. The social network is modelled by a node-weighted graph \textbf{$G= (V,E, t)$} with $V(G)$ representing individuals in 
the social network, $E(G)$ denoting the social connections, and $t$ an integer-valued {\em threshold function}. 
Starting with a {\em target set}, that is, a subset $S \subseteq V$ of nodes in the graph, that are {\em activated} by some external incentive, influence propagates deterministically in 
discrete time steps, and {\em activates} nodes. 

For any unactivated 
node $v$, if the number of its activated neighbors at time step $t-1$ is 
at least $t(v)$, then node $v$ will be activated in step $t$. A node once activated 
stays activated. 
It is easy to see that if $S$ is non-empty, then the process terminates after 
at most $|V|-1$ steps. We call the set of nodes that are activated when the 
process terminates as the {\em  activated set}.  
The problem proposed by Domingo and Richardson \cite{DR-01,RD02} can now be 
formulated as  follows:
Given a social network $G= (V, E, t)$, and an integer $k$, find a subset $S
\subseteq V$ of size $k$ so that the resulting activated set is as 
large as possible. In the context of viral marketing, the parameter $k$ corresponds to the budget, and $S$ is a target  
set that maximizes the size of the  activated set. One question of interest is to find the cheapest way to activate the {\em entire network}, when possible. The optimization problem that results  has been called the {\em Target Set Selection Problem}, and has been widely studied (see for eg.  \cite{Chen-09,BHLN11,NNUW}): the goal is to find a minimum-sized set $S \subseteq V$ that activates the entire network (if such a set exists).  In a certain sense, the elements of this minimum target set $S$ are the most influential people in the network; if they are activated, the entire network will eventually be activated.

There are, however,  two hidden flaws in the formulation of the  target set 
problem.  First, the nodes in the target set are assumed to be activated  immediately by external incentives, {\em regardless of their own thresholds of activation}. 
  This is not a realistic assumption; in the context of viral marketing, it is possible, perhaps even  likely, that highly influential nodes have high thresholds, and cannot be activated by external incentives alone. Secondly,  there is no possibility of giving {\em partial} external incentives; indeed the  target set is activated {\em only}  by external incentives, and the remaining nodes {\em only} by the internal network effect. 

In this paper, we address the flaws mentioned above. We study  a related but different problem. Suppose Alice  wants to join a new social network, whom should she befriend if her goal is to influence the entire social network? In other words, to whom should Alice create links, so that she can activate  the entire network? If Alice creates a link to a node $v$, the threshold of $v$ is only effectively reduced by one, and so $v$ in turn is activated only if its threshold is one. The problem can be generalized to any set of $k$ external influencers that wish to collectively "take over" a network. We call our problem the {\em Minimum Links} problem ({\sf Min-Links}). 

The {\sf Min-Links} problem provides a new way to model a viral marketing strategy, which 
addresses the flaws described in the target set problem formulation. 
The links added from the external nodes correspond to the external incentive given to the endpoints of these links. The  nodes that are the endpoints of these new links may not be immediately completely activated, but their thresholds are effectively reduced; this corresponds to their receiving partial incentives. One way of seeing this is that every individual  to whom we link is given a \$10 coupon; for some people this may be enough for them to buy the product, for others, it reduces their resistance to buying it. Individuals with high thresholds cannot be activated only by external incentives. The {\sf Min-Links} problem also has important applications in epidemiology or the spread of epidemics: in the spread of a new disease, where an infected person or a set of infected people arrives from {\em outside} a community, the {\sf Min-Links} problem corresponds to  identifying the smallest set of people such that if the infected external people have contact with this set, the entire community could potentially be infected.

Observe that the solution to the {\sf Min-Links} problem can be quite different from the solution to the  Target Set  Selection problem for a given network.  For example, consider a star network, where the leaves all have threshold 1, while the central node has degree $|V|-1$ and has threshold $|V|$. The optimal  target set  is the central node, while the only solution to the {\sf Min-Links} problem with a single influencer is to create links to all nodes in the network. Thus, a solution to the {\sf Min-Links} problem can be arbitrarily larger than one to the  Target Set Selection problem for the same social network. However, any solution to the {\sf Min-Links} problem is clearly also a feasible solution to the  Target Set Selection problem. 

\subsection{\red{Our Results}}

 We prove that  there exists a (gap-preserving) reduction from the classical
Target Set Selection problem to the {\sf Min-Links}  problem.  Using the important 
results by \cite{Chen-09}, this implies that 
the  {\sf Min-Links}  problem, even in presence of  a single external influencer,
 cannot be approximated to within a ratio 
of  $O(2^{\log^{1-\epsilon} n})$,  for any fixed $\epsilon>0$, 
unless $NP\subseteq DTIME(n^{polylog(n)})$, where $n$ is the number of nodes in the graph. 
 In light of this  hardness result, we study the complexity of the problem for  networks that can be represented as trees, cycles, and cliques. 
In each case, we give a necessary and sufficient condition for the feasibility of the {\sf Min-Links} problem, based on the structural properties and an observation of the threshold function.  We then give $O(|V|)$ algorithms to solve  the {\sf Min-Links} problem for all the studied graph topologies.  We also give exact bounds on the number of links needed to activate the entire network for all the above specific topologies,  as a function of the threshold values. 
Finally, we present
a polynomial time algorithm {that}, given an arbitrary   network $G$ and
a number of influencers equal to the maximum node threshold,   computes a 
 ``small''    set of links  sufficient to   activate the whole network. Our  polynomial time algorithm 
always returns a solution    for $G$ of size at most 
$ \sum_{v\in V}  
\frac{t(v)(t(v) +1)}{2(d_G(v)+1)}$, where $d_G(v)$ is the degree of the vertex  $v$.
\remove{
exhibits the following features: 
\begin{enumerate} 
\item For general graphs, 
it always return a solution    for $G$ of size at most 
$ \sum_{v\in V}  
\frac{t(v)(t(v) +1)}{2(d(v)+1)}$.
\item For trees and complete graphs our algorithm always returns an  \emph{optimal} pervading link set.
\end{enumerate}
}

\subsection{Related work}
The problem of identifying the most influential nodes in a social network has received a tremendous amount of attention \cite{GBLV13,HJBC14,LBGL13,GLL11,CWY09,GBL11,BBC14,FGH12}. The algorithmic question of choosing the target set of size $k$ that activates the most number of nodes in the context of viral marketing was first posed by Domingos and Richardson \cite{DR-01}. Kempe \emph{et al.} \cite{KKT03} started the study of this problem as a discrete optimization problem, and studied it in both the probabilistic independent cascade model and the threshold model of the influence diffusion process. They showed the NP-hardness of the problem in both models, and showed that a natural greedy strategy has a $(1 - 1/e - \epsilon)$-approximation guarantee in both models; these results were generalized to a more general cascade model in \cite{KKT05}. 

In the  Target Set Selection problem, the size of the target set is not specified in advance, but the goal is to activate the entire network. 
Namely, given a graph $G$ and fixed arbitrary thresholds $t(v)$, $\forall v\in V$, find
a target set of minimum size that eventually activates
all (or a fixed fraction of) nodes of $G$.
Chen \cite{Chen-09} proved  a strong   inapproximability result for the 
Target Set Selection problem that makes unlikely the existence
of an  algorithm with  approximation factor better than  $O(2^{\log^{1-\epsilon }|V|})$.
 A polynomial-time algorithm for trees was given in the same paper. 
Chen's inapproximability  result stimulated a series of papers  
(see for instance \cite{ABW-10,BCNS,BHLN11,Centeno12,Chang,Chun2,Chun,Chopin-12,Cic+,Cic14,C-OFKR,Fr+,GHPV13,Gu+,Li+,Mo+,NNUW,Re,W+,Za} and references therein quoted)
that isolated 
many interesting scenarios 
in which the problem and variants thereof become tractable.
Ben-Zwi et al. \cite{BHLN11} generalized Chen's result on trees to show  that target set selection can be solved in $n^{O(w)}$ time where $w$
 is the treewidth of the input graph. The effect of several parameters,  such as diameter and  vertex cover number,
 of the input graph on the complexity of the problem are studied in \cite{NNUW}. The Minimum Target Set has also been studied 
from the point of view of the spread of disease or epidemics. For example, 
in \cite{DR09}, the case when all nodes have a threshold $k$ is studied; the authors showed that the problem is NP-complete for fixed $k \geq 3$.

Maximizing the number of nodes activated within a specified number of rounds has also been 
studied \cite{DZNT14,LP14}. The problem of  dynamos or dynamic monopolies in graphs 
(eg. \cite{Peleg02}) is essentially  the target set problem restricted to the case when every node's threshold is half its degree. 
The recent monograph \cite{CLC} contains an excellent overview of the area.

The paper closest to our work is \cite{DHM14}, in which Demaine {\em et al.} introduce a model to {\em  partially incentivize} nodes to maximize the spread of influence. Our work differs from theirs in several ways. First, they study the maximization of influence given a fixed budget, while we study in a sense the budget (number of links) needed to activate the entire network. Second, they consider thresholds chosen uniformly at random, while we study arbitrary thresholds.  Finally, they allow arbitrary fractional influence to be applied externally on any node, while in our model, every node that receives a link has its threshold reduced by the same amount.

\section{Notation and preliminaries.} \label{notation}
Given a social network represented by an undirected graph $G=(V, E, t)$, we introduce a  set of external nodes $U$ that are assumed to be already activated. We assume that all edges have unit weight; this is 
generally called the {\em uniform weight assumption}, 
and has previously been considered in many papers
\cite{Chen-09,GHPV13,Cic+}.
A {\em link set} for $(G, U)$ 
is a set  $S$ of  links between nodes in $U$ and nodes in $V$, i.e $S \subseteq \{ (u, v) \mid u \in U; v \in V\}$. For a link set $S$, we define  $E(S) = \{v  \in V\mid \exists (u, v) \in S \}$, that is, $E(S)$ is the set of $V$-endpoints of links in $S$. 
For a node $v$, define $s(v)$ to be the number of links in $S$ for which $v$ is an endpoint. 
Since the set of external  nodes $U$ is already activated, observe that adding the link set $S$ to $G$ is equivalent to reducing the threshold of the node $v$ by $s(v)$. In the viral marketing scenario, the link set $S$ represents giving $v$ a partial incentive of $s(v)$  \cite{Sirocco15,DHM14}. 

Given a link set $S$ for a graph $G$, we define  $I(G, S)$ to be the set of 
nodes in $G$ that are eventually activated as a result of  adding the link set $S$, that is, by reducing the threshold of each node  $v \in E(S)$ by $min \{ s(v), t(v) \}$, and then running the influence diffusion process. See Figure~\ref{link-set-example} for an illustration. 
Observe that in the target set formulation, this is the same as the set of nodes activated by using $U$ as the target set in the graph $G'$,  the graph obtained from $G$ by adding the set $U$ to the node set and the set $S$ to the set of edges.

 \begin{figure} 
 \centering
 \includegraphics[height=2.0in]{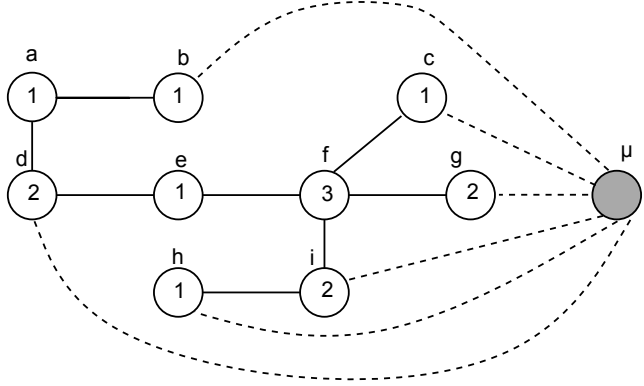}
 \centering
 \caption{Node $\mu$ is the only  external influencer and is assumed to be activated. Links in the link set are shown with dashed edges. The given link set activates the entire network and is an optimal pervading link set.}
 \label{link-set-example}
 \end{figure}

 A link set $S$ such that $I(G, S) = V$, that is, $S$ activates the entire network, is called a {\em pervading link set}.  A pervading  link set of minimum size is called an {\em optimal pervading link set.}  
\begin{definition}
	{\bf Minimum Links ({\sf Min-Links}) problem:} Given a social network $G=(V, E, 
	t)$, where $t$ is the threshold function on $V$, and a set of external nodes $U$, find an  optimal pervading link set for $(G, U)$.
		\end{definition}

For each node $v \in E(S)$, we say we {\em give} $v$ a link, or that $v$ {\em receives} a link. In our algorithms, we express a link  set as a {\em link vector}, 
$\bs = (s(v_1), \ldots , s(v_n))$, where 
$s(v)$, as defined earlier, is an integer representing  the number of links between  external nodes in $U$ and  the vertex $v \in V$. The external influencer-endpoints of these links are understood to be distinct, but otherwise can be chosen arbitrarily within $U$. If activating $X \subseteq V$ activates, directly or indirectly, the set of vertices $Y$, we write
$X \ac Y$ (note that there may be vertices outside $Y$ that $X$ activates).  
We write $x \ac Y$ instead of $\{x\} \ac Y$.  
The minimum cardinality of a link set for  a {\sf Min-Links} instance $G$ is denoted $ML(G)$.

Observe that for some graphs, and some sizes of the external influencer set, a  pervading link set  may not exist. For example, consider a singleton node of threshold greater than 1, and a single influencer. The existence of a feasible solution can be verified in $O(E)$ time by giving $k$ links to every node in $V$, and simulating the influence diffusion process.

\section{NP-hardness}\label{hard}

In this section, we consider the complexity of    the {\sf Min-Links} problem. 
We  prove the following result.
\begin{theorem}
In networks with $n$ nodes 
 {\sf Min-Links} problem  cannot be approximated to within a ratio 
of  $O(2^{\log^{1-\epsilon} n})$ for any fixed $\epsilon>0$, 
unless $NP\subseteq DTIME(n^{polylog(n)})$, even if the network has bounded degree, and all thresholds are at most 2. 
\end{theorem}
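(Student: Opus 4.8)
The plan is to prove the theorem by exhibiting a gap-preserving reduction from the Target Set Selection problem to the {\sf Min-Links} problem, and then to invoke Chen's inapproximability bound \cite{Chen-09} as a black box. Concretely, starting from an instance $G=(V,E,t)$ of Target Set Selection, I would construct a {\sf Min-Links} instance $(G^*,U)$ with a single external influencer $U=\{\mu\}$, arranged so that the size of an optimal pervading link set is within a constant factor of the optimal target set of $G$. Since the construction will enlarge the node count only polynomially, writing $N=|V^*|=n^{O(1)}$ we have $\log^{1-\epsilon}N=\Theta(\log^{1-\epsilon}n)$, so the factor $O(2^{\log^{1-\epsilon}n})$ for Target Set Selection carries over to {\sf Min-Links} after a harmless adjustment of $\epsilon$; this is what lets the gap survive.

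One inequality is essentially free and was already noted in the introduction: for any pervading link set $S$, the set of $V$-endpoints $E(S)$ is a feasible target set of no larger size, because fully activating $E(S)$ dominates merely lowering the thresholds of those same nodes, so the target-set process activates everything the link set does. Hence $OPT_{TSS}(G^*)\le ML(G^*)$. The substance of the reduction is the reverse direction, which is false in general (the star example shows the ratio can be unbounded). To force it I would hang off each vertex $v$ a constant-size \emph{selection gadget}: a short chain of threshold-$1$ nodes rooted at an entry node $c_v$, wired so that the single link $(\mu,c_v)$ cascades down the chain and feeds $v$ exactly as many freshly activated neighbors as it needs to fire outright. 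A single link then simulates ``put $v$ in the target set'' at unit cost. I must then argue the matching \textbf{no-cheating} claim: in an optimal link set one may assume every link is placed on a threshold-$1$ entry node $c_v$, so that partial incentives buy nothing and link sets correspond to target sets cost-for-cost, yielding $ML(G^*)=\Theta(OPT_{TSS}(G))$ and the tight equivalence $I(G^*,S)=V^*\iff \{v:\ c_v\in E(S)\}$ is a target set of $G$, i.e. $\{v:c_v\in E(S)\}\ac V$.

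The hard part, and the part that dictates the gadget design, is respecting the two extra restrictions in the statement: \emph{bounded degree} and \emph{thresholds at most $2$}. A selection gadget can only supply $v$ a constant number of neighbors without inflating $\deg(v)$, and it can only fire $v$ if $t^*(v)\le 2$; so the reduction cannot be applied to arbitrary thresholds directly. The main obstacle is therefore to first route $G$ through a degree-and-threshold normalization that replaces each high-degree, high-threshold vertex by a bounded-degree gadget built only from threshold-$1$ and threshold-$2$ nodes while preserving the monotone activation semantics and the optimum up to polynomial factors, and only afterwards attach the selection gadgets. Establishing that this two-stage transformation is simultaneously gap-preserving, degree-bounded, and threshold-$\le 2$ -- and in particular ruling out that some clever combination of partial links in $G^*$ undercuts the lower bound $E(S)\ac V^*$ -- is the crux; everything else is bookkeeping, and once the tight correspondence is in hand the inapproximability follows immediately from \cite{Chen-09}.
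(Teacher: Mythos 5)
Your overall strategy is the paper's: a gap-preserving reduction from Target Set Selection with a single influencer $\mu$, using a per-vertex gadget whose entry node has threshold $1$ and whose cascade feeds $v$ exactly $t(v)$ freshly activated neighbors, so that one link simulates ``put $v$ in the target set'' at unit cost. The gadget you sketch (a threshold-$1$ entry node $c_v$ cascading into $v$) is essentially the paper's $\Lambda_v$, where $v''$ plays the role of $c_v$ and the $t(v)$ disjoint paths $(v', v_i, v'')$ supply $v'$ with $t(v)$ activated neighbors. Your ``free'' inequality (endpoints of a pervading link set form a target set) and your no-cheating step also correspond to the paper's backward direction, though the paper avoids a swapping argument by simply defining $T=\{v : \exists w\in V'_v \mbox{ with } (\mu,w)\in S'\}$ and observing that fully activating $v$ dominates whatever a stray link inside $\Lambda_v$ could have achieved.

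The genuine gap is the part you yourself flag as ``the crux'' and leave unresolved: the degree-and-threshold normalization. No such normalization stage is needed, and proposing one as a prerequisite leaves the proof incomplete. The enabling fact, which the paper uses and you miss, is that Chen's inapproximability for Target Set Selection \emph{already holds} on bounded-degree instances with all thresholds at most $2$. Starting the reduction from such a restricted instance, the gadget construction automatically inherits both properties: $v'$ keeps its original threshold $t(v)\le 2$, every other gadget node has threshold $1$, the gadget adds only $t(v)\le 2$ new neighbors to $v'$, and the node count grows by at most a constant factor per vertex, so the gap transfers with no adjustment of $\epsilon$. In other words, the two restrictions in the statement are not an obstacle your gadget must overcome; they are assumptions you get to make on the source instance. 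Without invoking the restricted form of Chen's theorem (or actually constructing the normalization you allude to), your argument does not establish the claimed hardness for bounded degree and thresholds at most $2$.
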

\proof
We  construct  a gap-preserving reduction from the   Target Set Selection  (TSS) problem. 
We recall that, given an input graph $G$ and  a threshold function $t$,   the TSS problem asks for a minimum size subset of vertices of $G$ that can activate all the other vertices.
The inapproximability claim of the theorem follows from the inapproximability result of TSS 
proved in \cite{Chen-09}, which holds even for graphs with bounded degree, when the thresholds are at most 2. 

Starting from an input instance  of 
the TSS problem, that is, a bounded-degree graph  $G=(V,E, t)$ with threshold function $t$, such that $t(v) \leq 2$ for all vertices $v\in V$, we build an instance of the Min-Links problem with a single influencer. Define the   graph $G'=(V', E', t')$ as follows:

\begin{itemize}
\item $V'=\bigcup_{v\in V}V'_v$ where $V'_v=\{v',v'',v_1,\ldots v_{t(v)}\}$. 
		 In particular, 
\begin{itemize}
		\item  we replace each  $v\in V$ by the gadget $\Lambda_v$ (cf. Fig. \ref{fig:gadget}) in which the node set is  $V'_v$ and  $v'$ and  $v''$  are connected by the 
		disjoint paths  ($v',v_i,v''$) for $i=1,\ldots, t(v)$; 
	\item the threshold of $v'$ in $G'$ is equal to the threshold  
	$t(v)$ of $v$ in $G$, while each other node in $V'_v$ has threshold equal  to 1.
	\end{itemize}	
\noindent
$\bullet$    $E'=\{ (v',u')\ |\  (v,u)\in E\} \cup \bigcup_{v\in V} \{(v',v_i) , (v_i,v'')\mbox{, for  } i=1,\ldots, t(v)\}$.
\end{itemize}
Summarizing,  $G'$ is constructed in such a way that for each gadget $\Lambda_v$, the node $v'$ plays the role of $v$ and is connected to all the gadgets representing neighbors of $v$ in $G.$ 
Hence, $G$ corresponds to  the  subgraph of $G'$ induced by the set $\{v'\in V'_v |\  v \in V\}.$
It is worth  mentioning that during an influence diffusion process if any node that belongs to a gadget  $\Lambda_v$ is active, 
then all the vertices in $\Lambda_v$ will be activated within the next  $3$ steps.
Moreover, there is only one influencer and the influencer set is $U=\{\mu\}$. Observe that all thresholds in $G'$ are at most $2$, and $G'$ remains of bounded degree. 

We claim that there is a target set $T \subseteq V$ for $G$ of cardinality $|T|=k$ \emph{if and only} if there is a pervading link set  for $(G',U)$ of size $k$.
Assume that $T\subseteq V$ is a target set for $G$, we consider the set of links $S'$, with $|S'|=k$, defined as
$$S'=\left\{(\mu,v'')\ |\ \mbox{$v''$ is the extremal node  in the gadget $\Lambda_v$ and $v \in T$}                                         \right\}.$$
 To see that $S'$ is a pervading link set, we notice that $S' \ac \{u \ | \ u \in V'_v, v\in T\}$ 
within three steps. Consequently, recalling that $T$ is a target set and that  $G$ is 
isomorphic to the  subgraph of $G'$ induced by  $\{v'\in V'_v |\  v \in V\}$, all the vertices  $v \in V'$ will be activated, that is $I(G',S')=V'$.

\medskip 
On the other hand, assume that $S'$ is a pervading link set for $(G',U)$ and   $|S'|=k$, we can easily build a target set 
$$
T=\{v \in V \ | \  \mbox{there exists } w \in V'_v \mbox{ such that } (\mu, w)\in S' \}.   
$$
By construction $|T| \leq |S'|$. We show now that $T$ is a target set for $G$.
To this aim,  for each $v\in V$ we consider two cases 
according to how the node $v'  \in \Lambda_v$ associated with $v$ is activated in $G'$:
\begin{itemize}
\item  
If there exists $w\in V'_v$ such that $(\mu, w)\in S'$ then, by construction $v \in T$.
\item
 Suppose otherwise  that  for each  $w\in V'_v$ it holds that $(\mu, w)\notin S'$.
In order to activate $v'$ (and afterwards 
	any other node in  $\Lambda_v$), there must exist a step  $i$ 
	when at least $t(v)$ of the neighbors of $v'$  in $V'-V'_v$  are active.
\end{itemize}
	Now we recall 
 that $G$ is the subgraph of $G'$ induced by the set $\{v'\in V'_v \ |\  v \in V\}$. Hence,  for each step $i\geq 0$ and for each    $v'$ which is active in $G'$ at step $i$
	(with link set $S'$), 
	we conclude that the corresponding node $v$ must be active in $G$ by step $i$ (with target set $T$).
Consequently any node $v$ will be activated in $G$. 
\qed

\begin{figure}	

\begin{center}
	\includegraphics[height=4.3truecm]{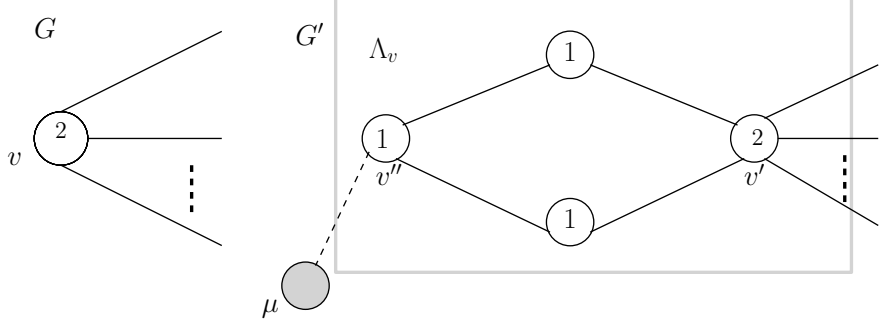}

		\caption{The gadget $\Lambda_v$: (left) a generic node $v \in V$ 
		having degree $d_G(v)$ and threshold $t(v)=2$; (right) the gadget $\Lambda_v$, having $t(v)+2=4$ vertices, associated to $v$.  \label{fig:gadget}}
\end{center}

\end{figure}

\bigskip

In the case of  
very small degree bound, it has been proved in \cite{Sirocco16} that 
 the {\sf Min-Links} problem is NP-hard; in fact, it is almost as hard as {\sf Set-Cover} to approximate,
even if $G$ has degree bounded by $3$ and thresholds bounded by $2$.

\begin{theorem}\cite{Sirocco16}
The decision version of {\sf Min-Links} is NP-complete, even when restricted to instances with maximum degree $3$ and maximum threshold $2$.
Moreover, there exists a constant $\epsilon > 0$ such that the optimization version of {\sf Min-Links}, under the same restrictions, is NP-hard to approximate
within a $\epsilon \ln n$ factor, where $n$ is the number of nodes of the given graph.
\end{theorem}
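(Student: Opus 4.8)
The plan is to establish membership in NP and then to give a single gap-preserving reduction from {\sf Set-Cover} that simultaneously yields NP-hardness and the $\epsilon\ln n$ inapproximability, while producing only instances of maximum degree $3$ and maximum threshold $2$. Membership in NP is immediate: a pervading link set $S$ is a certificate of size polynomial in $|V|$, and checking $I(G,S)=V$ just means simulating the deterministic, monotone diffusion process, which (by the remarks in Section~\ref{notation}) terminates in at most $|V|-1$ rounds and hence runs in polynomial time.

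For the hardness I would reduce from {\sf Set-Cover}, whose optimum is NP-hard to approximate within $(1-o(1))\ln N$, where $N$ is the size of the universe. Given an instance with universe $\{e_1,\dots,e_N\}$ and sets $S_1,\dots,S_m$, I would build $G$ with one \emph{selector} node per set, engineered so that the only route to its activation is an external link; thus the number of links used equals the number of chosen sets. Each selector feeds a \emph{fan-out} binary tree of threshold-$1$, degree-$\le 3$ nodes that broadcasts a set's activation to copies of every element it contains, and symmetrically each element sits atop a \emph{fan-in} (OR) binary tree of threshold-$1$ nodes, so that the element becomes active as soon as one containing set is activated. Decomposing every high-degree vertex into such binary trees keeps the maximum degree at $3$; the threshold of $2$ is introduced only where needed to block spurious back-activation, so that an element can be covered solely through a genuinely selected set. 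Once all elements are active, the remaining (unselected) gadgets are activated for free by back-propagation, which is exactly why $G$ is pervaded precisely when the chosen selectors form a set cover.

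Correctness has the two usual directions. For completeness, a set cover of size $k$ gives $k$ links, one per selector of a chosen set; these activate all element copies through the fan-out trees and then, by back-propagation, the whole graph, so $ML(G)\le k$. For soundness I would prove a \emph{normalization} lemma: any pervading link set can be transformed, without increasing its cardinality, into one placing links only at selector nodes, by rerouting any link that lands inside a fan-out or fan-in tree, or on an element node, to the selector dominating it. A normalized link set of size $k$ then names $k$ selected sets that, by the threshold-$2$ barrier, must cover every element, giving a set cover of size $k$. Hence $ML(G)$ equals the {\sf Set-Cover} optimum, and since the number of nodes $n$ is polynomial in $N+m$ (so $\ln n=\Theta(\ln N)$), the $(1-o(1))\ln N$ gap transfers to an $\epsilon\ln n$ gap for {\sf Min-Links} for a suitable constant $\epsilon>0$.

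I expect the soundness side to be the main obstacle, specifically designing the gadgets so that the normalization lemma holds and so that no degree-$3$, threshold-$2$ configuration admits an activation shortcut pervading $G$ with fewer links than a minimum set cover. Guaranteeing that the only cost-efficient way to activate all element nodes is to activate containing selectors—and that links squandered inside gadgets never help—is where the gap could collapse if the thresholds or tree structures are chosen carelessly, and is the part demanding the most careful verification.
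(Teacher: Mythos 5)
The paper never proves this theorem: it is quoted verbatim from \cite{Sirocco16}, so there is no in-paper argument to compare yours against line by line. The surrounding text does confirm that the cited result relates {\sf Min-Links} to {\sf Set-Cover}, so your choice of source problem matches the intended route, and your NP-membership argument (polynomial-size certificate, diffusion simulable in at most $|V|-1$ rounds) is fine.

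However, your reduction as sketched contains a genuine internal contradiction. You require each selector node to be ``engineered so that the only route to its activation is an external link''---this is what makes the link count equal the number of chosen sets in the soundness direction. But a pervading link set must activate \emph{every} node of $G$, including the selectors of the sets you did \emph{not} choose; if their only route to activation is an external link, then every selector must receive a link and $ML(G)=m$ independently of the cover structure, which destroys the gap. You try to escape this by letting unselected gadgets be ``activated for free by back-propagation,'' but that directly negates the only-route property and reopens the door to shortcuts: with threshold-$1$ fan-in trees, a single link placed anywhere in an element's OR-tree activates that element (and, via the same back-propagation, structure above it) without selecting any set, so on the hard {\sf Set-Cover} instances $ML(G)$ could fall well below the cover optimum. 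Reconciling these two requirements---unchosen selectors must eventually activate, yet no cheap link placement may simulate a cover---is the entire content of the gadget design and of your normalization lemma, and it is exactly the part you defer as ``demanding the most careful verification.'' As written, the proposal is a plan with an acknowledged hole where the proof should be. You also never fix the number of external influencers $k$, which matters here: with $k=1$ a threshold-$2$ node cannot be activated by links alone, while with $k\geq 2$ it can, and the gadgets must be built for whichever regime you choose.
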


\section{Algorithms for MinLinks}\label{Sec4}

In this section, we give linear time algorithms to solve the MinLinks problem for trees, cycles, and cliques,  for  any given set of $k$ external influencers (i.e. a single node is able to receive up to $k$ links).
Hereafter, the external influencers are $U = \{\extinf_1, \ldots, \extinf_k\}$ and a solution $S$ for a graph $G$ consists 
in a set of distinct links $(\extinf_i, v)$ where $\extinf_i \in U$ and $v \in V(G)$. 
We start with the following simple observation:

\begin{obs} \label{impossible}
A graph $G$ does not have a pervading link set if it has a node $v$ such that $t(v) > degree(v) + k$,  or if every node has threshold strictly greater than $k$.

\end{obs}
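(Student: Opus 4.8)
The plan is to prove the contrapositive in each case: I want to show that if either structural condition holds, then no link set can activate all of $V$. The observation splits into two independent sufficient conditions for infeasibility, so I would handle them separately.

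First I would dispose of the condition ``there is a node $v$ with $t(v) > \text{degree}(v) + k$.'' The key fact here is that the total external influence any single node can ever receive is bounded. With $k$ external influencers, node $v$ can receive at most $k$ links, since a solution consists of distinct links and there are only $k$ influencers available to link to $v$; thus $s(v) \le k$. Hence the maximum reduction of $v$'s threshold is $\min\{s(v), t(v)\} \le k$. After this reduction, the effective threshold of $v$ is at least $t(v) - k > \text{degree}(v)$. Since $v$ can be activated only when the number of its already-activated neighbors in $G$ reaches its (reduced) effective threshold, and it has only $\text{degree}(v)$ neighbors in total, even activating every neighbor of $v$ leaves strictly fewer than the required number. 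Therefore $v$ can never be activated by any link set, so $I(G,S) \ne V$ for every $S$, and no pervading link set exists.

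Next I would treat the condition ``every node has threshold strictly greater than $k$.'' The obstacle to activation here is the very first step: the diffusion process can only begin if some node becomes active purely from external incentive, before any internal neighbor has been activated. For any node $v$, the only way it can activate at the first time step (when no neighbor in $V$ is yet active) is if the external reduction already meets its threshold, i.e.\ $s(v) \ge t(v)$. But $s(v) \le k < t(v)$ for every $v$ under this hypothesis, so no node can be activated at the first step. Since the activated set is initially empty and no node can enter it first, a simple induction on the time step shows the activated set remains empty throughout: if no node is active at step $\tau$, then at step $\tau+1$ every node still has zero active neighbors, hence needs $s(v) \ge t(v)$ to activate, which fails. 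Thus $I(G,S) = \emptyset$ for every link set $S$, and again no pervading link set exists.

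I expect no genuine difficulty in either case; the statement is essentially a sanity-check observation. The one point requiring minor care is the second case, where I must rule out the possibility that two nodes ``bootstrap'' each other — the induction argument above closes that gap cleanly, since activation strictly requires a nonzero count of active neighbors once $s(v) < t(v)$. I would phrase both arguments uniformly using the fact that adding the link set is equivalent to reducing each $v$'s threshold by $\min\{s(v), t(v)\}$ with $s(v) \le k$, as established in the preliminaries.
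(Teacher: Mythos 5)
Your proof is correct; the paper states this observation without proof, and your two-case argument (a node with $t(v) > \mathrm{degree}(v)+k$ can never meet its reduced threshold since $s(v)\le k$, and if every threshold exceeds $k$ the initially activated set $\{v : s(v)\ge t(v)\}$ is empty so the process never starts) is exactly the intended reasoning. The only point worth spelling out, which you do, is the induction ruling out mutual bootstrapping in the second case.
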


\subsection{Trees}

In contrast to the NP-completeness of the {\sf Min-Links} problem shown in the previous section, we now show that there is a linear time algorithm to solve the problem in trees. We start with a necessary and sufficient condition for a tree $T$ to have a valid pervading link set.

\begin{proposition}
Let $T$ be a tree and let $v$ be a leaf  in $T$. Let $T' = T - \{v\}$ and $T''$ be the same as $T'$ except that the threshold of $w$, the neighbor of $v$ in $T$, is reduced by 1. Then $T$ has a pervading link set  if and only if (a) either  $t(v) \leq k$ and $T''$ has a pervading link set or (b) $t(v) = k + 1$ and  $T'$ has a pervading link set. 

\label{necc-suff-tree}
\end{proposition}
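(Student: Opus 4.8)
The plan is to prove this biconditional by analyzing how the leaf $v$ and its threshold interact with any pervading link set. The key structural fact I would lean on is Observation~\ref{impossible}: since $v$ is a leaf with $degree(v)=1$, a pervading link set can exist only if $t(v) \leq degree(v)+k = k+1$. So the only feasible threshold values for $v$ are $t(v) \leq k$ or $t(v)=k+1$, which is exactly why cases (a) and (b) partition the interesting possibilities. I would first dispose of the infeasible case $t(v) > k+1$, where neither $T$, nor condition (a), nor condition (b) can hold, so the statement is vacuously consistent.

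For the forward direction, suppose $T$ has a pervading link set $S$. The central observation is that $v$ must eventually be activated, and since $v$'s only graph-neighbor is $w$, activation of $v$ comes from at most one internal neighbor ($w$) plus the $s(v)$ external links. I would split on whether $t(v) \le k$ or $t(v) = k+1$. If $t(v)=k+1$, then because $v$ has only one neighbor $w$, the only way to activate $v$ is to give it $k+1$ links --- but this is impossible with $k$ influencers unless we also use $w$; more carefully, $v$ needs $t(v)=k+1$ activated contributors, and with only $k$ external influencers, $w$ must be activated and contribute, meaning $v$ plays no role in activating $w$. Hence restricting $S$ to $T'$ (deleting all links to $v$) still activates all of $T'$, establishing (b). If $t(v)\le k$, I would transform $S$ into a pervading link set for $T''$: the idea is that in $T''$ the threshold of $w$ is already reduced by one, which compensates for the loss of $v$ as a potential activator of $w$; simultaneously, any links $S$ placed on $v$ can be moved or reasoned away since $v$ can be activated directly by $\leq k$ links. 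This gives a pervading link set for $T''$.

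For the reverse direction, I would show each condition yields a pervading link set for $T$. Under (a), take a pervading link set $S''$ for $T''$ and augment it by giving $v$ exactly $t(v)\le k$ links (feasible with $k$ influencers); since $T''$ already accounts for the $-1$ to $w$'s threshold, the extra activation that $v$ provides to $w$ in the real tree $T$ only helps, and $v$ itself is activated by its $t(v)$ direct links, so all of $V(T)$ is activated. Under (b), take a pervading link set $S'$ for $T'$ and add $k+1$ links to $v$ --- wait, this needs care, since we have only $k$ influencers. Here I would instead argue that $w$ gets activated within $T'$ by $S'$, and then $w$ together with $k$ external links on $v$ activates $v$ (giving $v$ a total of $k+1 = t(v)$ activated neighbors). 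So $S' \cup \{k \text{ links on } v\}$ pervades $T$.

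The main obstacle I anticipate is the forward direction in case (a): carefully showing that an arbitrary pervading link set for $T$ can be converted into one for $T''$ without increasing behavior we cannot control. The subtlety is that $v$ might be activated \emph{before} $w$ and then help activate $w$ in $T$; in $T''$ we lose $v$ entirely but gain the $-1$ on $w$'s threshold, and I must verify this exchange is always sufficient. The clean way is a diffusion-ordering argument: track the activation round of $w$ under $S$, note $v$ contributes at most $1$ to $w$, and show the threshold reduction in $T''$ exactly substitutes for that single contribution, so the same diffusion (minus $v$) succeeds in $T''$.
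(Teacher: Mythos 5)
Your argument is correct, and it is worth noting that the paper itself states Proposition~\ref{necc-suff-tree} without proof, so there is no official proof to match; the closest analogues are the exchange arguments in Lemma~\ref{lemma:tree} and Theorem~\ref{tree-algo}, which reason about \emph{optimal} link sets, whereas your proposition concerns only \emph{feasibility} and your direct approach is the right one for that. Your case split is forced by Observation~\ref{impossible} exactly as you say, and the two delicate points are both handled: for $t(v)=k+1$ you correctly observe that $v$ can receive at most $k$ distinct links (one per influencer), hence needs $w$ activated first and therefore contributes nothing to the activation of $T'$, so the restriction of $S$ pervades $T'$; for $t(v)\le k$ the restriction of $S$ to $V(T'')$ pervades $T''$ because $v$ contributes at most $1$ to $w$ and to no other node, and the unit reduction of $t(w)$ absorbs that contribution. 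The one place your write-up is loose is the phrase that links on $v$ ``can be moved or reasoned away'' --- they are simply discarded, and the clean justification is the round-by-round domination induction you sketch at the end: if $A_j$ and $B_j$ denote the sets activated by round $j$ in $(T,S)$ and in $(T'',S\restriction_{V(T'')})$ respectively, then $A_j\cap V(T'')\subseteq B_j$ by induction, using that every $u\neq w$ has the same neighborhood in both trees and that $w$ loses at most the one neighbor $v$ while gaining a unit of threshold. The same domination argument (run in the opposite direction) cleanly finishes your converse cases, including your self-corrected case (b) where $v$ receives $k$ links and the $(k{+}1)$-st contribution comes from the already-activated $w$. I would only ask you to write out that induction once explicitly rather than gesturing at it, and to note the harmless boundary case $t(w)=1$, where $w$ has threshold $0$ in $T''$ and is activated unconditionally.
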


We now prove a critical lemma that shows that for any node $v$ in the tree, there is an optimal solution that gives $\min (t(v), k)$ links  to $v$.

\begin{lemma}\label{lemma:tree}
Let $T$ be a tree with $n$ nodes that has a pervading link set, and let $v$ be a node in $T$. Then there exists an optimal solution for {\sf Min-Links}$(T)$ in which $v$ gets $\min \{t(v), k\}$ links.
\end{lemma}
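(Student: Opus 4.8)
The plan is to argue by a local exchange that turns any optimal pervading link set into another of the same size in which $v$ receives one more link, and then iterate. First I would normalize: in an optimal solution $S$ we may assume $s(x)\le t(x)$ for every node $x$, since a link given to an already-saturated node contributes nothing and could be deleted, contradicting optimality. Hence $m:=\min\{t(v),k\}$ is the largest number of \emph{useful} links $v$ can receive. If $s(v)=m$ we would be done; otherwise $s(v)=j<m$, which forces both $j<t(v)$ (so an extra link strictly lowers $v$'s effective threshold) and $j<k$ (so there is a free influencer available to link to $v$). It therefore suffices to construct an optimal $S'$ with $s'(v)=j+1$ and repeat at most $m$ times.

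Next I would locate a compensating link to remove. Running the diffusion under $S$, let $\tau\ge 1$ be the step at which $v$ is activated; since $v$'s effective threshold is $t(v)-j\ge 1$, at step $\tau-1$ at least $t(v)-j$ neighbours of $v$ are already active. Fix one such neighbour $u$ and let $T_u$ be the component of $T-\{v\}$ containing $u$. Because $T$ is a tree, $v$'s only neighbour inside $T_u$ is $u$, and $u$ is active strictly before $v$; consequently $v$ never contributes to the activation of $T_u$, so the links of $S$ lying in $T_u$ already pervade $T_u$ on their own. Descending from $u$ along a chain of strictly-earlier-activated neighbours inside $T_u$, I reach a source $w_r$ that is active at step $0$, \ie $s(w_r)=t(w_r)\ge 1$; this yields a simple path $v,w_0=u,w_1,\dots,w_r$ whose activation times strictly decrease. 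I then define $S'$ by adding one link to $v$ and deleting one link from $w_r$, so that $|S'|=|S|$.

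To see that $S'$ is still pervading, I would separate the two effects. The neighbours of $v$ active before step $\tau$ lie in pairwise-distinct components of $T-\{v\}$, and each such component activates independently of $v$; since $S'$ alters links only inside $T_u$ (at $w_r$), all of these neighbours \emph{other than} $u$ --- there are at least $(t(v)-j)-1=t(v)-(j+1)$ of them --- still become active by step $\tau-1$ exactly as before. As $v$'s new effective threshold is precisely $t(v)-(j+1)$, node $v$ is activated by step $\tau$ without relying on $u$. It then remains to show that $T_u$ is fully activated under $S'$, which I would do by reversing the activation chain: the now-active $v$ activates $w_0=u$, then $w_0$ activates $w_1$, and so on down to $w_r$, which --- having lost one link and hence needing exactly one active neighbour --- is activated by $w_{r-1}$.

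The main obstacle is making this path-reversal rigorous, because each $w_i$ was originally activated with help from off-path neighbours whose own activation must be re-certified under $S'$. I would handle this by induction on the original activation time $t_S(\cdot)$: the links inside every off-path subtree are untouched, and every off-path contributor to $w_i$ was active strictly before $w_i$, so (by the same tree-independence used above) its activation did not depend on the reversed flow and is preserved. This certifies that every node of $T_u$ is eventually activated, hence $S'$ pervades $T$, has size $|S|$, and gives $v$ one extra link. Iterating raises $s(v)$ to $m=\min\{t(v),k\}$, which proves the lemma.
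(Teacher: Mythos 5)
Your argument is correct, but it is a genuinely different proof from the one in the paper. The paper proceeds by induction on the number of nodes: it picks an optimal $S$ maximizing the number of links at $v$, finds a neighbour $w$ that contributes to $v$'s activation, splits $T$ at the edge $vw$, and applies the inductive hypothesis to the subtree containing $w$ so as to concentrate $\min\{t(w),k\}\ge 1$ links on $w$ itself; it then swaps a single link from $w$ to $v$ across that one edge, and the verification that the new set still pervades is a two-line observation ($v$ replaces $w$'s influence, and $v$'s activation replaces the deleted link at $w$). You instead avoid induction on the tree and perform the exchange directly: you trace the activation chain from $v$ back to a source $w_r$ with $s(w_r)=t(w_r)$, move a link from the possibly distant $w_r$ to $v$, and re-verify pervasion by reversing the activation along the whole path and re-certifying the off-path contributors via the separation properties of the tree. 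This is essentially the ``chain reversal'' technique the paper itself uses for cycles (in the proof of Theorem~\ref{cycle-algo}), transplanted to trees. What the paper's induction buys is that all the combinatorial difficulty of ``pulling a removable link up to a neighbour of $v$'' is absorbed by the inductive hypothesis, so the final exchange is local and trivially checked; what your route buys is a self-contained, non-recursive argument that increments $s(v)$ one unit at a time, at the price of the delicate path-reversal verification. You correctly identify that verification as the crux and handle it the right way: the essential point, which you should state explicitly, is that each off-path contributor $z$ of $w_i$ lies in a component of $T-\{w_i\}$ containing neither $w_{i+1}$ (hence not $w_r$) nor $w_{i-1}$ (hence not $v$), and is activated before $w_i$, so its activation is confined to a region where no link changed; together with an induction on the original activation times this certifies all remaining nodes. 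Two small points to make explicit when writing this up: the new solution again satisfies your normalization $s'(x)\le t(x)$ (so the iteration is well-defined), and the existence of the free influencer and of the deletable link at $w_r$ follow from $j<k$ and $t(w_r)\ge 1$ respectively.
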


\begin{proof}
We prove the lemma by induction on the number of nodes $n$ in the tree. Clearly it is true if $n=1$. Suppose $n>1$, and let $S$ be an optimal pervading link set for $T$.  
Moreover, choose $S$ such that $v$ receives a maximum number of links among all optimal solutions.
If $v$ gets $\min \{t(v), k\}$ links, we are done. If not, then $v$ cannot be activated by external influence alone, and so $v$ must have a neighbor $w$ that is activated before 
it, and that contributes to the activation of $v$. Let $T_1$ and $T_2$ be the two trees created by removing the edge between $v$ and $w$, with $T_1$ containing $w$, and let $S_1$ 
(respectively $S_2$) be the links of $S$ with an endpoint in $T_1$ (respectively $T_2$). Since $T$ is a tree, and $v$ is activated after $w$ by $S$, none of the links in $S_2$ can contribute to the activation of nodes in $T_1$. It follows that $S_1$ is a pervading link set for  $T_1$, and in fact is optimal, as a smaller solution for $T_1$ could be combined with $S_2$ to yield a better solution for $T$, contradicting the optimality of $S$.  By the inductive hypothesis, there is an optimal solution $S'$ for $T_1$ that gives $\min \{t(w), k\}$ links to $w$. Note that $|S'| = |S_1|$, and $S' \cup S_2$ must also be an optimal solution for $T$. 
Let $\extinf_i$ be an external influencer not giving a link to $v$ in $S' \cup S_2$, and let $\extinf_j$
be an external influencer giving a link to $w$ in $S' \cup S_2$ (note that $\extinf_i$ and $\extinf_j$ must exist).
Clearly $S''= S' \cup S_2  \cup \{ (\extinf_i, v) \} - \{ (\extinf_j, w) \}$ also activates the entire tree $T$ (because the $w$ influence on $v$ is replaced by $(\extinf_i, v)$, and
so $v$ still activates, and the $(\extinf_j, w)$ influence on $w$ is replaced by $v$'s activation).  Moreover since $|S''| = |S|$, we conclude that $S''$ is an optimal solution for  $T$.
But $S''$ gives more links to $v$ than $S$, contradicting our choice of $S$.  We deduce that there is an optimal pervading link set that gives $\min \{t(v), k\}$ links to $v$, as needed to complete the proof by induction. 
\end{proof}

The above lemma suggests a simple way to break up the {\sf Min-Links} problem for a tree into subproblems that can be solved independently, which yields a linear-time greedy algorithm. 

\begin{theorem}
The {\sf Min-Links} problem can be solved for  trees in linear time. 
\label{tree-algo}
\end{theorem}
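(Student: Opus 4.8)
The plan is to root $T$ at an arbitrary vertex $r$ and process the vertices in post-order (children before parents), repeatedly stripping off the leaves of the current tree and charging each one the number of links prescribed by Lemma~\ref{lemma:tree}. At every stage the current object is a genuine tree with a modified threshold function, and the quantity I track for each vertex is its \emph{effective threshold}: the original threshold minus the number of already-processed neighbours that have been activated purely by external links (each such neighbour contributes one unit of internal influence); a non-positive effective threshold is treated as $0$. Because a vertex is processed exactly when all its children have been removed, at that moment it is a leaf of the remaining tree with its parent $w$ as its unique neighbour, which is precisely the situation to which both Proposition~\ref{necc-suff-tree} and Lemma~\ref{lemma:tree} apply.

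Concretely, let $v$ be the leaf being processed, with effective threshold $t(v)$ and parent $w$. Observation~\ref{impossible} and Proposition~\ref{necc-suff-tree} guarantee that in a feasible instance $t(v)\le k+1$, and they split into exactly two cases. If $t(v)\le k$, I assign $t(v)$ links to $v$; then $v$ is activated by external influence alone, so it helps activate $w$, and I pass to the instance $T''$ in which $v$ is deleted and the threshold of $w$ is decreased by one. If $t(v)=k+1$, I assign $k$ links to $v$; now $v$ cannot be activated externally and must wait for $w$, so $v$ contributes nothing to $w$, and I pass to the instance $T'$ in which $v$ is simply deleted. In both cases the number of links charged to $v$ is exactly $\min\{t(v),k\}$, matching Lemma~\ref{lemma:tree}.

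The correctness of this rule rests on the recurrence $ML(T)=\min\{t(v),k\}+ML(T^{\star})$, where $T^{\star}$ is $T''$ in the first case and $T'$ in the second. The ``$\le$'' direction is the easy construction: take an optimal solution of $T^{\star}$ and add the prescribed links to $v$; Proposition~\ref{necc-suff-tree} guarantees that the result activates all of $T$. For the ``$\ge$'' direction I would invoke Lemma~\ref{lemma:tree} to obtain an optimal solution $S$ of $T$ that gives $v$ exactly $\min\{t(v),k\}$ links; deleting $v$ together with its links (and, in the first case, recording $v$'s unit of influence on $w$ by lowering $w$'s threshold) yields a pervading link set of $T^{\star}$ of size $|S|-\min\{t(v),k\}$. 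The key structural fact making this deletion valid is that, since $T$ is a tree, the edge $vw$ is the only channel through which $v$ interacts with the rest of the graph, so the bookkeeping on $w$ captures $v$'s entire effect on $T^{\star}$ (this is exactly the independence of the two sides of a removed edge exploited inside the proof of Lemma~\ref{lemma:tree}).

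Unrolling the recurrence along the post-order traversal gives $ML(T)=\sum_v(\text{links charged to }v)$ together with an explicit optimal link vector, and since each vertex is visited once and each edge triggers at most one threshold update, the whole computation runs in $O(n)$ time; infeasibility is reported the instant a leaf with effective threshold exceeding $k+1$ is encountered, or, at the very end, when the root $r$ (which has no remaining neighbour to help it) still has effective threshold larger than $k$. The step I expect to be the main obstacle is the ``$\ge$'' direction of the recurrence: one must argue carefully that after extracting the $\min\{t(v),k\}$ links guaranteed by Lemma~\ref{lemma:tree}, the residual link set is genuinely \emph{feasible and optimal} for the reduced instance, i.e.\ that the subproblem obtained by removing $v$ does not interact with the removed part except through $w$, and that the effective-threshold bookkeeping faithfully mirrors the threshold reduction prescribed by Proposition~\ref{necc-suff-tree}.
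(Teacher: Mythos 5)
Your overall plan (strip leaves, charge each leaf $\min\{t(v),k\}$ links via Lemma~\ref{lemma:tree}, recurse on a reduced tree) is the paper's plan, but your reduction step in the case $t(v)\le k$ is wrong, and the error is not just a detail of bookkeeping. When the leaf $v$ is activated purely externally, its influence does not merely lower the parent's threshold by one: it cascades through the entire connected component $C$ of threshold-$1$ nodes containing $v$, which can extend \emph{through} the parent to siblings and beyond. The paper's reduced instance deletes all of $C$ and decrements the threshold of every neighbor of $C$; yours deletes only $v$ and decrements only $w$. Concretely, take the path $v-w-v'$ with all thresholds $1$ and $k\ge 1$, rooted at $w$. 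Your post-order pass processes $v$ (charge $1$ link, set $w$'s effective threshold to $0$), then $v'$ (its only neighbor $w$ is still unprocessed, so its effective threshold is still $1$ and it is charged another link), then $w$ (charged $0$). You output $2$ links, but $ML=1$ by Theorem~\ref{tree-ML}: one link to $v$ activates $v$, then $w$, then $v'$. The downward/sideways propagation from $w$ to $v'$ is exactly what your child-to-parent-only bookkeeping misses.

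The same example exposes why your ``$\ge$'' direction cannot be patched by simply invoking Lemma~\ref{lemma:tree} on the reduced instance: your reduction creates nodes of effective threshold $0$, and the lemma fails in their presence (in the residual edge $w-v'$ with $t(w)=0$, the optimal solution is empty, so no optimal solution gives $v'$ its $\min\{t(v'),k\}=1$ links; the link-swapping argument in the lemma's proof needs $w$ to receive at least one link). The paper's component-removal step is precisely what keeps every residual threshold at least $1$ --- each neighbor $x$ of $C$ has $t(x)\ge 2$ and is adjacent to exactly one node of $C$ because $T$ is a tree --- so the lemma remains applicable at every stage of the recursion. To repair your argument you would need to replace ``delete $v$ and decrement $t(w)$'' with ``delete the whole threshold-$1$ component containing $v$ and decrement the thresholds of its neighbors,'' at which point you recover the paper's algorithm. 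The $t(v)=k+1$ branch and the complexity claim are fine.
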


\begin{proof}
Given a tree $T$, let $v$ be an arbitrary leaf in the tree. By Lemma~\ref{lemma:tree}, there is an optimal solution, say $S$,  to the {\sf Min-Links} problem for $T$ that gives $\min \{t(v), k\}$ links to $v$.
Let $S_v$ be the set of links given to $v$. 
Suppose $t(v) >k$, then the links given to $v$ are not enough to activate $v$, and therefore $v$'s neighbor $w$ must contribute to the activation of $v$. 
Also, $v$'s activation cannot help in activating any other nodes in $T$. Thus $S - S_v$ must be an optimal solution to $T' = T - \{v \}$.  Suppose instead that $t(v) \leq k$. Then the links given to $v$ activate it immediately. Consider the induced subgraph $T^{(1)}$ of $T$ that contains $v$, plus every node of $T$ of threshold 1. 
Let $C$ be the connected component (subtree) of $T^{(1)}$ that contains $v$ (note that $C$ might have only $v$). Then clearly $v \ac C$. Since $S$ is optimal, $S$ cannot contain any link to a node in $C$ except for $v$. Construct $T'$ by removing $C$ from $T$, and subtracting 1 from the threshold of any node $x$ who is a neighbor of a node in $C$. Observe that any such node $x$ can be a neighbor of exactly one node in $C$, since $T$ is a tree. Then $S - S_v$  must be an optimal solution to $T'$; if instead there is a smaller-sized solution to $T'$, we can add the links from $S_v$ to $v$ to that solution to obtain a smaller solution for $T$ than $S$, contradicting the optimality of $S$. 

The above argument justifies the correctness of the following simple greedy algorithm. Initialize $S = \emptyset$. Take a leaf $v$ in the tree. If $t(v) > k + 1$ then there is no solution by Observation~\ref{impossible}. If $t(v) = k + 1$, then give $k$ links to $v$ in $S$, 
remove $v$ from the tree, and recursively solve the remaining tree.  If $t(v) \leq k$, 
then give $t(v)$ links to $v$ (from arbitrary influencers), remove the subtree of $T$ that is connected to $v$ consisting only of nodes of degree 1, reduce the thresholds of all neighbors of the nodes in this subtree by 1, and recursively  solve the resulting trees. It is easy to see that the algorithm can be implemented in linear time. 
 \end{proof}


For the  network in Figure~\ref{link-set-example}, assuming that leaves in the tree are always processed in alphabetical order, the greedy algorithm given in Theorem~\ref{tree-algo} first picks node $b$ and adds a link to it.  We then remove nodes $b$ and $a$, and reduce the threshold of $d$ by 1. Next we pick $c$, give it a link, remove it from the tree, and decrement $t(f)$ to 2. The next leaf that is picked and given a link is $d$; since $d$'s threshold now is 1, we remove $d$ and $e$ from the tree, and reduce $f$'s threshold to 1. Proceeding in this way, we arrive at the link set shown.

We now give an exact bound on $ML(T)$, the number of links required to activate the entire tree $T$:

\begin{theorem}
Let $T$ be a tree that has a pervading link set. Then $ML(T) =   1 + \sum_{v \in T}  (t(v)-1) $	
\label{tree-ML}
\end{theorem}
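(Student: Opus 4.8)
The plan is to prove the two inequalities $ML(T)\ge 1+\sum_{v}(t(v)-1)$ and $ML(T)\le 1+\sum_{v}(t(v)-1)$ separately; for brevity write $f(T)=1+\sum_{v\in T}(t(v)-1)$. The lower bound is the conceptual heart of the argument, while the matching upper bound follows by tracking the number of links spent by the greedy algorithm of Theorem~\ref{tree-algo}.

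For the lower bound, I would fix an \emph{arbitrary} pervading link set $S$ and run the diffusion process to completion (all of $T$ activates since $S$ is pervading). Let $\tau(v)$ denote the first step at which $v$ becomes active, and for each $v$ let $\text{in}(v)=|\{u\sim v:\tau(u)<\tau(v)\}|$ be the number of neighbours that activate strictly before $v$. The activation rule forces $s(v)+\text{in}(v)\ge t(v)$, hence $s(v)\ge t(v)-\text{in}(v)$ for every $v$ (this also holds trivially when the right-hand side is non-positive, since $s(v)\ge 0$). Summing over $v$ gives $|S|=\sum_v s(v)\ge \sum_v t(v)-\sum_v \text{in}(v)$. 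The key observation is that $\sum_v \text{in}(v)$ counts exactly those tree edges whose two endpoints activate at different times, each charged once to its later endpoint, so $\sum_v \text{in}(v)\le |E(T)|=n-1$. Substituting yields $|S|\ge \sum_v t(v)-(n-1)=f(T)$, and since $S$ was arbitrary, $ML(T)\ge f(T)$.

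For the upper bound I would argue by induction on $n$ that the greedy algorithm of Theorem~\ref{tree-algo} outputs a pervading link set of size exactly $f(T)$. The base case $n=1$ spends $\min\{t(v),k\}=t(v)$ links, matching $f(T)=t(v)$. For the inductive step, pick a leaf $v$ and split into the two cases of the algorithm. When $t(v)=k+1$, the algorithm spends $k=t(v)-1$ links on $v$ and recurses on $T-v$ with unchanged thresholds (Proposition~\ref{necc-suff-tree}(b)); one checks directly that $f(T)=(t(v)-1)+f(T-v)$, so the induction closes. When $t(v)\le k$, the algorithm spends $t(v)$ links on $v$, deletes the connected component $C$ of $v$ in the threshold-$1$ subgraph, and reduces by one the threshold of each of the (say $p$) neighbours of $C$, splitting $T$ into $p$ subtrees $T'_1,\dots,T'_p$. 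Here the bookkeeping is: the $p$ leading additive constants $1$ coming from $\sum_j f(T'_j)$ cancel exactly the $p$ unit threshold reductions, and the threshold-$1$ nodes of $C$ each contribute $0$, so the total link count is $t(v)+\sum_{u\notin C}(t(u)-1)=f(T)$.

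The step I expect to be the main obstacle is precisely this counting in the $t(v)\le k$ case, where the tree fragments into several subtrees and one must confirm that the per-tree constant $1$ and the threshold reductions balance. A useful sanity point that keeps the induction clean is that any neighbour $x$ of $C$ necessarily has $t(x)\ge 2$ (otherwise $x$ would itself belong to the threshold-$1$ component $C$), so no threshold ever drops to $0$ and the formula $f$ applies verbatim to each recursive instance. Combining the two bounds gives $ML(T)=f(T)=1+\sum_{v\in T}(t(v)-1)$, as claimed.
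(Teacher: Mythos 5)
Your proof is correct, but it takes a genuinely different route from the paper's. The paper establishes the equality in a single induction on $|V(T)|$: it peels off one leaf $x$, invokes Lemma~\ref{lemma:tree} to obtain an optimal solution giving $\min\{t(x),k\}$ links to $x$, and then relates $ML(T)$ to $ML(T')$ or $ML(T'')$ by cut-and-paste optimality arguments, split into three cases according to $t(x)$ and the threshold of $x$'s neighbour $w$. You instead prove the two inequalities separately. Your lower bound is a global charging argument: $s(v)\ge t(v)-\mathrm{in}(v)$ for every $v$ (where $\mathrm{in}(v)$ counts neighbours activated strictly earlier), and $\sum_v \mathrm{in}(v)\le |E(T)|=n-1$ because each edge is charged to at most one endpoint. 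This bypasses Lemma~\ref{lemma:tree} and the structure of optimal solutions entirely, and in fact proves $ML(G)\ge \sum_{v}t(v)-|E(G)|$ for an arbitrary graph $G$, which explains why trees are exactly the tight case. Your upper bound only needs that the greedy algorithm of Theorem~\ref{tree-algo} outputs a \emph{feasible} link set of size $f(T)$, not that it is optimal, and your bookkeeping in the $t(v)\le k$ case is right: the $p$ leading constants $1$ from the subtrees $T'_1,\dots,T'_p$ cancel the $p$ unit threshold reductions, threshold-$1$ nodes of $C$ contribute $0$, and your observation that every neighbour of $C$ has threshold at least $2$ keeps the recursive instances well-formed. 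What your decomposition buys is that the awkward subcase $t(x)\le k$, $t(w)=1$ of the paper's induction disappears (it is absorbed into removing the whole threshold-$1$ component $C$), and the lower bound generalizes beyond trees; what the paper's single induction buys is that it re-derives the optimal-solution structure along the way rather than importing the algorithm's guarantees from Theorem~\ref{tree-algo}.
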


\begin{proof}
We give a proof by induction on the number of nodes $n$ in the tree. Clearly if the tree consists of a single node $x$, there is a solution if and only if $t(x) \leq k$, and the number of links needed is $t(x)$ which is equal to $ 1 + \sum_{v \in V}  (t(v)-1)$ as needed. Now consider a tree $T$ with $n > 1$ nodes and let $x$ be a leaf in the tree. Then by Lemma~\ref{lemma:tree}, there is an optimal solution $S$ in which $x$ gets a set $S_x$ of $\min \{t(x), k\}$ links. By Observation~\ref{impossible}, there is a solution  only if $t(x) \leq k + 1$.  
Let $T'= T- \{x\}$ (all nodes keep the same thresholds as in $T$) and let $T''$ be the tree derived from $T$ by removing $x$ and reducing the threshold of $w$, the neighbor of $x$ in $T$ by 1.

First we consider the case when  $t(x) = k + 1$. Then giving $k$ links to $x$ from $S_x$ is not sufficient to activate it. 
By the usual cut-and-paste argument,   $S - S_x$ 
must be an optimal solution for tree $T'$.  

\vspace*{-0.1in}
\begin{eqnarray*}
ML(T) & = &k + ML(T')\\
& = & t(x) - 1 + ( 1+ \sum_{v \in V(T')} (t(v) - 1)) \mbox{ by the inductive hypothesis }\\
& = &  1+ \sum_{v \in V(T)} (t(v) - 1).
\end{eqnarray*}
\vspace*{-0.1in}

Next  we consider the case when $t(x) \leq k$, and  $t(w) > 1$. Then $x$ is immediately activated by the $t(x)$ links it receives in $S$, and the activation of $x$ effectively reduces  the threshold of $w$. Therefore, $S - S_x$ must be an optimal solution for the tree $T''$ in which the threshold of $w$ is $t(w) -1$. It follows that

\vspace*{-0.1in}
\begin{eqnarray*}
ML(T) & = &t(x) + ML(T'')\\
& = & t(x) + ( 1+ \sum_{v \in V(T'')} (t(v) - 1)) \mbox{ by the inductive hypothesis }\\
& = &  t(x) + 1 + (t(w) - 2) +  \sum_{v \in V(T'') - \{ w\}} (t(v) - 1) \\
& = & 1 + \sum_{v \in V(T)} (t(v) - 1).
\end{eqnarray*}
\vspace*{-0.1in}	

Finally suppose $t(x) \leq k$ and $t(w) = 1$.  
Then it is impossible that $S$ contains a link to $w$, as this would contradict the optimality of $S$. Therefore, we can move one link from node $v$ to node $w$, to get a new optimal pervading link set $S'$ for $T$. Furthermore, $S' - S_x$ must also be an optimal pervading link set for $T'$. It follows that 

\vspace*{-0.1in}
\begin{eqnarray*}
ML(T) & = & ML(T')\\
& = & t(x) - 1 + ( 1+ \sum_{v \in V(T')} (t(v) - 1)) \mbox{ by the inductive hypothesis }\\
& = &  1+ \sum_{v \in V(T)} (t(v) - 1).
\end{eqnarray*}
\vspace*{-0.1in}
\end{proof}

We remark that in contrast to the intuition for the optimal target set problem, where we would choose nodes of high degree or threshold to be in the target set, in the {\sf Min-Links} problem, our algorithm gives links to leaves initially, though eventually nodes that were internal nodes in the tree may also receive links. That is, the best nodes to befriend might be the nodes with a single connection to other nodes in the tree!

\subsection{Cycles}

In this section, we give a solution for the \ML problem on cycles. 
Let $C_n = (V, E, t)$ be a cycle with $n$ nodes, $V = \{0, 1, ... , n - 1\}$, $E 
= \{((i,  i + 1)\ mod\ n) \ |\ 1 \leq i \leq n \}$, 
and $t: t(v) \rightarrow \cal{Z^{+}}$.   
We define $P_{i, j}$ $ (i \neq j )$  to be the sub-path of $C_n$ consisting of 
all nodes in $\{i, \ldots, j\}$ in the clockwise direction.  We may use 
the $[i, j]$ notation to denote the vertices of $P_{i, j}$.
By \emph{consecutive vertices of threshold $k + 2$}, we mean two vertices $i, j$ such that 
the only two vertices in $P_{i ,j}$  with threshold $k + 2$ are $i$ and $j$.

\begin{proposition}
A  cycle has a pervading link set if and only if the following conditions hold:\\
(1) there is at least one node of threshold at most $k$,\\
(2) every node is of threshold at most $k + 2$,\\
(3) between any  two consecutive nodes of threshold $k + 2$, there is at least one node of  threshold at most $k$. 

\label{necc-suff-cycle}
\end{proposition}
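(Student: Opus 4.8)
The plan is to prove the two directions separately, treating conditions (1) and (2) as quick consequences of Observation~\ref{impossible} and devoting the real effort to the necessity of (3) and to the sufficiency direction. Since every node of a cycle has degree $2$, a node $v$ with $t(v)\ge k+3$ satisfies $t(v)>degree(v)+k$ and hence rules out any pervading link set by Observation~\ref{impossible}, which gives the necessity of (2); likewise, if every threshold exceeds $k$ then no pervading set exists, giving the necessity of (1). So the substance lies in (3) and in sufficiency.

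For the necessity of (3), I would argue by contradiction. Suppose $i,j$ are consecutive nodes of threshold $k+2$ with no node of threshold at most $k$ strictly between them, and write the closed segment as $i=p_0,p_1,\ldots,p_m=j$, where by (2) each interior $p_\ell$ ($1\le \ell\le m-1$) has threshold exactly $k+1$. Let $S$ be any link set that activates the whole cycle and look at the first node of $[i,j]$ to become active under $S$; at that instant no other node of $[i,j]$ is active. If this first node is an interior $p_\ell$, both neighbors $p_{\ell-1},p_{\ell+1}$ lie in $[i,j]$ and are inactive, so activating it would require $s(p_\ell)\ge t(p_\ell)=k+1>k$, which is impossible. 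If it is $p_0$ (or symmetrically $p_m$), then of its two neighbors the one in $[i,j]$ (namely $p_1$) is inactive, so even counting its outside neighbor it would need $s(p_0)\ge t(p_0)-1=k+1>k$, again impossible. Hence no node of $[i,j]$ can ever be the first to activate, so $[i,j]$ is never activated, contradicting that $S$ is pervading. I expect this to be the main obstacle, since the claim must hold for \emph{every} link assignment at once; the key point is that the two threshold-$(k+2)$ endpoints, each requiring both neighbors, seal the segment so that activation can neither originate inside it nor leak into it.

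For sufficiency I would assume (1)--(3) and exhibit the link assignment that gives each node $v$ exactly $\min\{t(v),k\}$ links. Then every node with $t(v)\le k$ is active immediately, every threshold-$(k+1)$ node has effective threshold $1$ (one active neighbor suffices), and every threshold-$(k+2)$ node has effective threshold $2$ (both neighbors required). I then view the threshold-$(k+2)$ nodes as cutting the cycle into arcs: by (3) each arc contains an initially active node of threshold at most $k$, and this seed activates the arc's interior through a one-sided cascade along the threshold-$(k+1)$ nodes, so every interior node of every arc eventually activates. Each threshold-$(k+2)$ node then has both neighbors among these interior nodes, so once all arcs are done both its neighbors are active and it activates, yielding $I(C_n,S)=V$. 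I would finish by disposing of the two degenerate cases that the arc picture does not literally cover: when there is no threshold-$(k+2)$ node the whole cycle is a single arc seeded by (1) and the cascade runs all the way around, and when there is exactly one such node the single remaining arc still contains the (1)-seed and the same argument applies.
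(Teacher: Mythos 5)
Your proposal is correct and follows essentially the same route as the paper's proof: conditions (1) and (2) from Observation~\ref{impossible}, necessity of (3) by showing the segment between two consecutive threshold-$(k+2)$ nodes can never start activating, and sufficiency by saturating every node with links and cascading. You simply spell out in more detail the steps the paper dispatches with ``it is easy to see.''
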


\begin{proof}
The necessity of the first two conditions follows from Observation~\ref{impossible}. As for the third condition, suppose there are two consecutive nodes $i$ and $j$ of threshold $k + 2$, 
such that all nodes between them have threshold $k + 1$. Then both nodes $i$ and $j$  needs both their neighbors to be activated before them (even if they receive $k$ links), but meanwhile, since there is no node of threshold $k$ or less in $[i + 1, j - 1]$,   
no node in the sub-path $P_{i+1, j-1}$ can be activated. Therefore none of the nodes in the sub-path $P_{i, j}$ can be activated.  Conversely, if all three conditions listed in the statement are met, it is easy to see that by giving $k$ links to every node in the cycle, all the nodes in the cycle can be activated. 
\end{proof}

We note that a similar condition can be stated for paths, with the additional restriction that there must be a node of threshold at most $k$ before (after) the first (last resp.) node of threshold $k + 2$, if any. 

We give a linear time algorithm for finding a minimum-sized link set for 
problem {\sf Min-Links}$(C_n)$. Essentially we reduce the problem to finding an optimal solution for an appropriate path.

\begin{theorem}
	The {\sf Min-Links} problem for a cycle $C_n$ can be solved
	in linear time.
\label{cycle-algo}
\end{theorem}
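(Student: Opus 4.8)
The plan is to reduce the cycle to a path and then invoke the linear-time tree algorithm of Theorem~\ref{tree-algo}. First I would run the feasibility test of Proposition~\ref{necc-suff-cycle} in $O(n)$ time; if it fails, report that no pervading link set exists. Otherwise condition~(1) of that proposition guarantees a node $v_0$ with $t(v_0)\le k$, and this node will be the point at which I ``break'' the cycle.

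The structural engine is the observation that in any pervading link set $S$ for $C_n$, at least one node $v^\ast$ must be \emph{self-activated}, i.e.\ activated at step $0$ by its links alone: at time $0$ no network neighbour is active, so the first node(s) to fire have $s(v^\ast)\ge t(v^\ast)$ (hence $t(v^\ast)\le k$), and in an optimal $S$ we may assume $s(v^\ast)=t(v^\ast)$. Given a self-activated node $v$, cutting the cycle at $v$ yields a path $P_v=C_n-\{v\}$ on $n-1$ nodes in which the two neighbours of $v$ have their thresholds reduced by $1$, to account for $v$'s activation. Because $C_n$ is a cycle, once $v$ is removed the only influence reaching the rest of the network through $v$ is captured exactly by these two unit reductions; consequently a link set pervades $C_n$ while self-activating $v$ if and only if its restriction to $V\setminus\{v\}$ pervades $P_v$. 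This gives both inequalities I need: restricting an optimal $S$ to its seed shows $ML(C_n)\ge \min_v (t(v)+ML(P_v))$ over self-activatable $v$, while seeding a chosen $v_0$ and appending an optimal solution for $P_{v_0}$ produces a pervading link set of size $t(v_0)+ML(P_{v_0})$.

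It then remains to show that breaking at an arbitrary node $v_0$ with $t(v_0)\le k$ is both \emph{legal} (so $P_{v_0}$ is feasible) and \emph{optimal}, so that a single run of the tree algorithm suffices. For feasibility I would verify the path analogue of Proposition~\ref{necc-suff-cycle}; the only delicate point is the wrap-around guard condition at the two endpoints $a,b$, the former neighbours of $v_0$. Here the unit reductions help: if $t(a)\le k+1$ then after reduction $a$ itself has threshold $\le k$ and guards its end, while if $t(a)=k+2$ then after reduction $a$ is no longer a $(k+2)$-node, and cycle condition~(3) supplies a node of threshold $\le k$ between $a$ and the next $(k+2)$-node, which serves as the required guard; the symmetric argument applies to $b$. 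For optimality I would compute $t(v)+ML(P_v)$ using the exact tree bound of Theorem~\ref{tree-ML}: writing $\Sigma=\sum_u (t(u)-1)$, the seed contributes $t(v)$, and the ``$+1$'' together with the two unit reductions in $ML(P_v)=1+\sum_{u\ne v}(t'(u)-1)$ telescope to give $t(v)+ML(P_v)=\Sigma$, independently of the choice of $v$. Hence every legal seed is optimal and $ML(C_n)=\Sigma$.

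The main obstacle I anticipate is precisely this last accounting when a reduction drives an endpoint threshold to $0$: Theorem~\ref{tree-ML} is stated for positive thresholds, so I must peel off any auto-active ($0$-threshold) endpoint before applying it. For $n\ge 4$ the two possibly-zero endpoints are non-adjacent, so each is an isolated auto-active leaf whose peeling contributes a clean $-1$ and the value $\Sigma$ is preserved; only the degenerate small cases, notably $n=3$ where the two endpoints coincide as mutual neighbours and can both become $0$, need to be checked by hand. Assembling the pieces, the feasibility test, the selection of $v_0$, the construction of $P_{v_0}$, and the single invocation of the tree algorithm of Theorem~\ref{tree-algo} are each $O(n)$, so the whole procedure runs in linear time.
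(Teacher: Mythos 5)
Your proof is correct, but it takes a genuinely different route from the paper's. The paper also reduces the cycle to a path and invokes Theorem~\ref{tree-algo}, but it breaks the cycle at a node $i$ of \emph{minimum} threshold and removes the entire maximal run of threshold-$1$ nodes containing $i$ (the segment strictly between $cc(i)$ and $c(i)$), decrementing the thresholds of the two boundary nodes; it then justifies this choice of seed by an explicit exchange argument, following the chain of activation from $i$ back to a node $j$ that is activated entirely by external influence and moving a link from $j$ to $i$ (or from two such nodes, in the case where both neighbours of $i$ are needed to activate it). You instead break at an \emph{arbitrary} node of threshold at most $k$, remove only that node, and replace the exchange argument by two observations: (i) every pervading link set must contain a self-activated seed, and (ii) by the exact tree bound of Theorem~\ref{tree-ML}, the quantity $t(v)+ML(P_v)$ telescopes to the same value $\sum_u (t(u)-1)$ for every legal seed $v$, so the choice of seed is irrelevant. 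Your route avoids the link-shifting case analysis, but it pays two prices: the correctness of the cycle algorithm now depends on the exact formula of Theorem~\ref{tree-ML} rather than only on the tree algorithm, and, because you remove a single node, the reduced path can acquire threshold-$0$ endpoints, forcing the peeling and degenerate-case bookkeeping you describe (the all-ones cycle, where the answer is $1$ rather than $\sum_u(t(u)-1)=0$, and $n=3$); the paper's choice of cutting out the whole threshold-$1$ run keeps the path's endpoint thresholds positive and sidesteps this entirely, handling the all-ones-type cases separately up front. Your feasibility check for $P_{v_0}$ and your verification that the two unit reductions exactly capture the seed's influence on the rest of the cycle are both sound, so with the flagged degenerate cases checked by hand the argument goes through in linear time.
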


\begin{proof}
If all nodes are of threshold 1, or if  there is a single node with threshold $2$, and the remaining nodes all have threshold 1, then  by giving 
a link to {\em any} of the nodes with threshold 1, we can activate the entire cycle, and this is clearly optimal. 

Therefore, in what follows, we assume that one of the following cases holds:

\begin{enumerate}[(a)]
\item the minimum threshold is greater than 1, 

\item the minimum threshold is 1,  and there are at least two nodes with threshold  $\geq 2$,  or 
 
 \item the minimum threshold is 1 and there is exactly one node with threshold $>2$.  
 
 \end{enumerate}
 
 Fix an arbitrary  node $i$ of 
minimum threshold  in $C_n$. We define $c(i)$ and $cc(i)$ to be the first 
node with threshold $>1$ in $i$'s clockwise direction and counter clockwise 
direction respectively. Observe that  in cases (a) and (b), $c(i) \neq cc(i)$ (see Figure \ref{fig:cycle1}), and  in case (c) above,  $c(i)=cc(i)$. We also define $P_{c(i), cc(i)}$ to be the path from $c(i)$ to $cc(i)$,   except that  in cases (a) and (b), we decrement $t(c(i))$ and $t(cc(i))$ by 1; and in case (c), the path contains a single node $c(i)$, and we decrement the threshold of $c(i)$ by 2. We now prove that  an optimal solution to {\sf MinLinks}($C_n$) can be constructed by giving $t(i)$ links to $i$ and combining them with an optimal solution to $P_{c(i), cc(i)}$. 

 We first claim that  there exists an optimal solution that gives  $t(i)$ links to $i$.  To see this, let $S$ be an optimal solution that gives $q < t(i)$ links to  node $i$. Observe that $q \in \{ t(i)-1, t(i)-2 \}$, as otherwise it is impossible for $i$ to be activated. First suppose $q = t(i)-1$. This means one of $i$'s neighbours must activate $i$. We follow the chain of activation to $i$, which must start at some node $j$ which is activated entirely by external influence. That is, $S$ must give $t(j)$ links to $j$. Without loss of generality, we assume $j \ac j+1  \ac \ldots \ac i-1 \ac i$. Let $\mu$ be such that
 $(\mu, j) \in S$ and  $(\mu , i) \notin S$. Such a $\mu$ must exist since $j$ received $t(j)$ links and $t(j)  \geq t(i)$ as $i$ was a node of minimum threshold in $C_n$ while $i$ received  $< t(i)$ links by assumption.  We now construct   a new solution $S' = S - \{ (\mu, j) \} \cup \{(\mu, i) \}$ by moving a link from $j$ to $i$. Since $i$ receives $t(i)$ links in $S'$, the node $i$ is immediately activated. Furthermore, $i \ac i-1 \ac \ldots \ac j+1$. Finally, since $j$ receives $t(j)-1$ links in $S'$, and $j+1$ is activated, node $j$ is activated in the next step. 
 Thus $S'$ is a solution of the same size as $S$, that gives $t(i)$ links to $i$ as needed. 
 
 Next suppose $q = t(i)-2$. Then both neighbours of $i$ must be activated by $S$ before $i$,  and both serve to activate $i$. We then follow the chain of activation in both directions from $i$ and find nodes $p$ and $q$ that were activated entirely by external influence (it is possible that $p=q$). We then move a link from each of $p$ and $q$ to $i$. Now $i$ is activated entirely by external influence and eventually activates $p$ and $q$. This completes the proof of the claim that there exists an optimal solution that gives $t(i)$ links to $i$.

 Consider therefore an optimal solution $S$ that gives $t(i)$ links to the node $i$; let us call this set of links $S_i$.   It is not hard to see that 
that $S - S_i$ must be an optimal solution to {\sf Min-Links}$(P_{c(i), cc(i)})$, since activating $i$ activates
$[cc(i) + 1, c(i) - 1]$ and lowers the threshold of $cc(i)$ and $c(i)$ by 1 each in cases (a) and (b) above and lowers the threshold of $c(i)$ by 2 in case (c) above. 

Finally, since the 
{\sf Min-Links} problem for a path can be solved in linear time according to Theorem 
\ref{tree-algo}, we can construct an optimal solution for a cycle in 
linear time as well.
\end{proof}
 
\begin{figure}[ht!]
  \centering
  \includegraphics[]{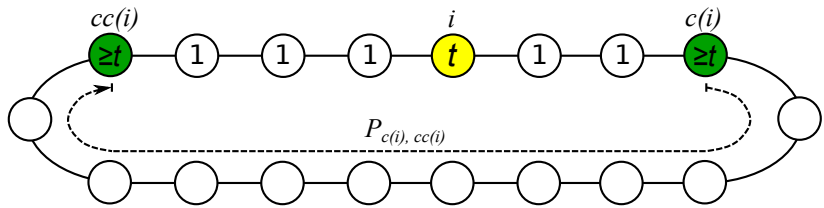}
  \caption{A cycle in which case (b) holds.  Node $i$ has minimum threshold $t$, while $cc(i)$ and $c(i)$ are the nodes closest to $i$ that have threshold higher than $1$.}
  \label{fig:cycle1}
\end{figure}

We give an exact bound on the number of links required to fully activate a cycle. 

\begin{theorem} 
	Given a cycle $C_n = (V, E, t)$ which has a pervading link set,  then
$$ML(C_n) = max \{ 1, \sum_{j=1}^n (t(j) - 1) \}.$$
\label{cycle-ML}
\end{theorem}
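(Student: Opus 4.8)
The plan is to leverage the two results already established for cycles and paths. Theorem~\ref{cycle-algo} shows that an optimal pervading link set for $C_n$ decomposes as ``give $t(i)$ links to a minimum-threshold node $i$, then optimally solve the path $P_{c(i),cc(i)}$,'' so that $ML(C_n) = t(i) + ML(P_{c(i),cc(i)})$, where the path carries the modified thresholds described there. Since a path is a tree, Theorem~\ref{tree-ML} gives $ML(P) = 1 + \sum_{v \in P}(t'(v)-1)$, with $t'$ denoting the (modified) thresholds on $P$. Combining these two identities and simplifying should yield exactly $\sum_{j=1}^n(t(j)-1)$ in every non-degenerate case, so that the whole argument reduces to a bookkeeping computation once the degenerate cases are handled separately.

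First I would dispose of the two degenerate regimes singled out at the start of the proof of Theorem~\ref{cycle-algo}: when all nodes have threshold $1$, and when exactly one node has threshold $2$ with all others of threshold $1$. In both, a single link activates the whole cycle, so $ML(C_n)=1$; meanwhile $\sum_{j=1}^n(t(j)-1)$ equals $0$ or $1$ respectively, and in each case $\max\{1,\sum_{j=1}^n(t(j)-1)\}=1$, matching the claimed formula.

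For the remaining cases (a), (b), (c) I would run the computation. The key observation is that the path $P_{c(i),cc(i)}$ contains every node of the cycle except $i$ together with the run of threshold-$1$ nodes flanking $i$ (those strictly between $cc(i)$ and $c(i)$ through $i$); every omitted node other than $i$ has threshold $1$ and hence contributes $0$ to the relevant sum. In cases (a) and (b) the two distinct endpoints $c(i)$ and $cc(i)$ each have their threshold decremented by $1$, so $\sum_{v\in P}(t'(v)-1) = \sum_{j=1}^n(t(j)-1) - (t(i)-1) - 2$; substituting into $ML(C_n)=t(i)+1+\sum_{v\in P}(t'(v)-1)$, the terms $t(i)$, the additive $+1$, the $-(t(i)-1)$, and the $-2$ telescope to leave exactly $\sum_{j=1}^n(t(j)-1)$. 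In case (c), $c(i)=cc(i)$ is the unique high-threshold node, its threshold is decremented by $2$, and the same substitution gives $ML(C_n)=1+(t(c(i))-2)=t(c(i))-1$, which equals $\sum_{j=1}^n(t(j)-1)$ since every other node has threshold $1$. Finally I would verify that in each of (a), (b), (c) we have $\sum_{j=1}^n(t(j)-1)\ge 1$ (at least one node of threshold $\ge 2$, in fact at least two in (b) and one of threshold $\ge 3$ in (c)), so the maximum with $1$ is attained by the sum and the single formula holds throughout.

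The main obstacle is not any isolated hard idea but getting the accounting exactly right: one must confirm that the nodes dropped when passing from $C_n$ to $P_{c(i),cc(i)}$ really all have threshold $1$ (so they are invisible to the sum), and that the constant offsets --- the $t(i)$ links spent on $i$, the additive $+1$ from the path formula, and the endpoint decrements (which total $-2$ in every case) --- cancel precisely. A secondary point of care is checking that the degenerate-case boundary is consistent, i.e.\ that $\sum_{j=1}^n(t(j)-1)<1$ occurs exactly in the two regimes where $ML(C_n)=1$, so that the expression $\max\{1,\sum_{j=1}^n(t(j)-1)\}$ uniformly covers all feasible cycles.
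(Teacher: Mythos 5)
Your proposal is correct and follows essentially the same route as the paper: dispose of the two degenerate regimes where $ML(C_n)=1$, then apply the decomposition $ML(C_n)=t(i)+ML(P_{c(i),cc(i)})$ from the cycle algorithm together with the tree formula of Theorem~\ref{tree-ML}, and check that the endpoint decrements (totalling $2$) and the omitted threshold-$1$ nodes make the accounting close. Your treatment is somewhat more explicit about case (c) and about verifying that the sum dominates $1$ in the non-degenerate cases, but the argument is the same.
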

\begin{proof} 
If all nodes have threshold 1,  then $ML(C_n) = 1 = max \{ 1, \sum_{j=1}^n (t(j) - 1) \}$.  If one node has threshold 2, and all the remaining nodes have threshold 1, then
$ML(C_n) = 1 = \sum_{j=1}^n (t(j) - 1)$. 
Finally, for all remaining cases, it follows from the optimality of our algorithm that $ML(C_n) = t(i) + ML(P_{cc(i), c(i)})$ where $i$ is a node of minimum threshold in $C_n$, and the value  of $t(c(i)) + t(cc(i))$  is 2 less in $P_{cc(i), c(i)}$ than in $C_n$. By Theorem~\ref{tree-ML}, we have $ML(P_{cc(i), c(i)})= - 1+ \sum_{j \in [cc(i), c(i)]}(t(j) -1)$.
Therefore $ML(C_n) = t(i) - 1 + \sum_{j \in [cc(i), c(i)]}(t(j) -1) = \sum_{j=1}^n (t(j) - 1)$. 
 \end{proof}

\subsection{Cliques}

In this section, we give an algorithm to solve the \ML problem on cliques. 
Let $K_n=(V, E, t)$  be a clique with $n$ nodes, $V = \{1, 2, ... , n\}$ and 
$E = \{(i,  j) :  1\leq i <  j \leq n  \}$ and $t: t(v) \rightarrow 
\cal{Z^{+}}$.  We first show a necessary and 
sufficient condition for the \ML problem to have a feasible solution:

\begin{proposition} 
Let $K_n$ be a clique with $t(i) \leq  
t(i+1)$, for all $1 \leq i < n$.  Then $K_n$ has a pervading link set if and only if 
$t(i) \leq i + k - 1$ for all $1 \leq i \leq n$.	
\label{necc-suff-clique}
\end{proposition}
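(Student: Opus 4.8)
The plan is to exploit the single most useful structural fact about cliques: once any $j$ vertices are active, every remaining vertex already has exactly $j$ active neighbours. Hence if we process the vertices in nondecreasing order of threshold, then at the moment we wish to activate the $i$-th vertex the first $i-1$ vertices are already active, so that vertex sees $i-1$ active neighbours and needs only $\max\{0,\, t(i)-(i-1)\}$ further units of external help, i.e.\ that many links. This immediately ties feasibility of the instance to the inequality $t(i)\le i+k-1$, since a single vertex can receive at most $k$ links (the external influencers being distinct).

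For the sufficiency direction (assuming $t(i)\le i+k-1$ for all $i$) I would construct an explicit pervading link set by giving vertex $i$ exactly $\max\{0,\, t(i)-(i-1)\}$ links and then activating the vertices in the order $1,2,\dots,n$. The bound $\max\{0,\, t(i)-(i-1)\}\le k$ follows directly from $t(i)\le i+k-1$, so the link set is legal. A short induction on $i$ shows that when it is vertex $i$'s turn, vertices $1,\dots,i-1$ are active and contribute $i-1$ to $i$'s count; together with its $t(i)-(i-1)$ links this reaches $t(i)$ and activates $i$. Thus the whole of $\{1,\dots,n\}$ is activated and $K_n$ has a pervading link set.

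For the necessity direction I would argue the contrapositive. Suppose $t(i)\ge i+k$ for some $i$. Because the thresholds are sorted, the set $H=\{j : t(j)\ge i+k\}$ is a suffix $\{m,\dots,n\}$ with $m\le i$, so the vertices outside $H$ number $m-1\le i-1$. Now take any link set and consider the first vertex $v\in H$ to become active. Just before its activation no vertex of $H$ is active, so all of $v$'s active neighbours lie outside $H$ and there are at most $m-1$ of them; adding the at most $k$ links $v$ can receive yields an effective count of at most $(m-1)+k\le (i-1)+k < i+k \le t(v)$, contradicting that $v$ activates. Hence no vertex of $H$ is ever activated, and $K_n$ cannot be pervaded.

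The main obstacle, and the step to state with care, is the counting in the necessity argument: the crucial point is that the \emph{first} vertex of $H$ to activate can draw active neighbours only from outside $H$, which caps its internal support at $m-1$, while its external support is capped at $k$ (and, more precisely, the effective threshold reduction at $\min\{s(v),t(v)\}\le k$). Everything else is routine. I would also note that the sufficiency construction simultaneously exhibits the greedy ``activate in nondecreasing threshold order'' strategy, which can be reused afterwards when computing $ML(K_n)$.
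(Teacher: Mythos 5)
Your proof is correct and follows essentially the same route as the paper: sufficiency by activating vertices in nondecreasing threshold order (the paper simply gives $k$ links to every vertex rather than the tighter $\max\{0,\,t(i)-(i-1)\}$, but the induction is identical), and necessity by observing that the first vertex of the high-threshold suffix to activate can have at most $i-1$ active neighbours plus $k$ links, which falls short of its threshold. The only cosmetic difference is that the paper anchors the suffix at the \emph{smallest} violating index $p$ while you anchor it at an arbitrary violating index $i$; the counting is the same.
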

\begin{proof}  If $t(i) \leq k + i - 1$ for all $1 \leq i 
\leq n$, it is easy to see that there exists a solution $S$ by giving $k$ links to 
every node $i$; we claim 
that node $i$ is activated in or before round $i$. Since $t(1) \leq k$, node 1 is activated in round 1.  Inductively, node 1 to $i-1$ 
are already activated in round $i-1$, the effective threshold of node $i$ has 
been reduced to at most $k$. Node $i$ receives $k$ links, therefore, node $i$ must be 
activated in the $i^{th}$ round, if it is not already activated.
Conversely, suppose there exist nodes $j$ such that $t(j) > k + j - 1$ and there exists a solution $S$  to the {\sf Min-Links} problem; let $p$ be  the smallest such node with  $t(p) > k + p - 1$. In order to activate any node $q \geq p$, at least $p$ nodes have to be activated before $q$, since $t(q) \geq t(p) > k + p - 1$. 
However, there are only $p-1$   nodes that can be activated before any such node $q \geq p$. Thus no node $q$ with $q \geq p$ can be activated, a contradiction.
\end{proof}

We now give a greedy algorithm to solve the {\sf Min-Links} problem on a clique.
 
\begin{theorem}
	The \ML problem for a clique  $K_n$ can be solved in time
	$\Theta(n + k)$. 
\label{clique-algo}
\end{theorem}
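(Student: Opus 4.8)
The plan is to reduce the clique problem to a one-dimensional assignment problem, exploiting the fact that in a clique the diffusion depends only on the \emph{number} of currently active nodes: if $s(v)$ links are given to $v$, then (since every other node is a neighbour of $v$) the node $v$ activates exactly when the active count reaches $t(v)-s(v)$. I would first note that we may assume $s(v)\le t(v)$ for every $v$, and then, assuming $t(1)\le\cdots\le t(n)$, describe the greedy algorithm: give node $i$ exactly $s(i)=\max\{0,\,t(i)-i+1\}$ links, declaring the instance infeasible if some $s(i)>k$ (equivalently $t(i)>i+k-1$, matching Proposition~\ref{necc-suff-clique}).

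For correctness I would prove two directions. The feasibility (upper bound) direction reuses the inductive round argument of Proposition~\ref{necc-suff-clique}: by induction on $i$, once nodes $1,\dots,i-1$ are active the effective threshold of node $i$ is $t(i)-(i-1)\le s(i)$, so $i$ activates; hence all $n$ nodes do. For optimality (lower bound) I would take an arbitrary pervading link set $S$ with multiplicities $s(v)$, list the nodes $u_1,\dots,u_n$ in the order they become active (earlier synchronous rounds first, ties inside a round broken arbitrarily), and observe that when $u_j$ activates the number of already-active nodes it can see is at most $j-1$; hence $s(u_j)\ge\max\{0,\,t(u_j)-j+1\}$. This reduces optimality to showing that, over all orderings, $\sum_j \max\{0,\,t(u_j)-j+1\}$ is minimized by the increasing-threshold order.

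The main obstacle is this last minimization, which I would settle by an exchange argument. Writing $f(t,j)=\max\{0,\,t-j+1\}$, the quantity $g_{a,b}(t)=f(t,j_a)-f(t,j_b)$ for positions $j_a<j_b$ is non-negative and non-decreasing in $t$; consequently, if a larger threshold occupies an earlier position than a smaller one, swapping the two changes the total by $g_{a,b}(t_{\mathrm{small}})-g_{a,b}(t_{\mathrm{large}})\le 0$, i.e. does not increase it. Iterating such swaps (bubble-sort style) transforms any order into the sorted order without increasing the cost, so the sorted assignment $s(i)=\max\{0,\,t(i)-i+1\}$ is optimal, giving $ML(K_n)=\sum_{i=1}^n \max\{0,\,t(i)-i+1\}$.

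Finally, for the running time I would argue the $\Theta(n+k)$ bound as follows. By Observation~\ref{impossible} every feasible instance has all thresholds at most $(n-1)+k$, so after an $O(n)$ scan that rejects any instance with an out-of-range threshold, the thresholds lie in $\{1,\dots,n+k-1\}$ and can be sorted by counting sort in $\Theta(n+k)$ time; computing each $s(i)$, verifying $s(i)\le k$, and summing are three further $O(n)$ passes. The total is $\Theta(n+k)$, dominated by the counting-sort range.
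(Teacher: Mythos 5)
Your proposal is correct, and the algorithm and time analysis coincide with the paper's (sort by counting sort over the range $\{1,\dots,n+k-1\}$ justified by Observation~\ref{impossible}, then assign $t(i)-i+1$ links to node $i$ whenever $t(i)\ge i$). The optimality argument, however, is genuinely different. The paper argues locally: an exchange of links along the chain of activation shows that some optimal solution gives $t(1)$ links to the minimum-threshold node, and a cut-and-paste step then reduces to the smaller clique on $\{j,\dots,n\}$ with thresholds decreased by $j-1$, so optimality follows by recursion. You instead prove a global lower bound valid for \emph{every} pervading link set: ordering the nodes $u_1,\dots,u_n$ by activation round, each $u_j$ sees at most $j-1$ active neighbours when it fires, whence $s(u_j)\ge\max\{0,\,t(u_j)-j+1\}$, and a rearrangement (bubble-sort exchange) lemma shows this sum is minimized by the increasing-threshold order. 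Both arguments are sound; yours has the advantage of delivering the closed form $ML(K_n)=\sum_{i}\max\{0,\,t(i)-i+1\}$ (the paper's Theorem~\ref{clique-ML}, which the paper states as ``immediate'') in the same stroke, and of avoiding the slightly delicate bookkeeping in the paper's link-moving step (where one must verify that the donor node $i\in A_1$ indeed holds $t(i)\ge t(1)$ links to spare). The paper's recursive decomposition, on the other hand, is the template reused for trees and cycles, so it keeps the three sections structurally parallel.
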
 

\begin{proof}
First sort the nodes in order of threshold. By Observation~\ref{impossible}, there is no solution if any node has a threshold greater than $n + k$, therefore, we can use counting sort and complete the sorting in $\Theta(n + k)$ time. 
Clearly, the 
condition given in Proposition~\ref{necc-suff-clique} can easily be checked in 
linear time. We now give the following greedy linear time algorithm for a clique which has a feasible solution: give $t(1)$ links to node 1 in order to activate it, and let $j > 1$ be the smallest value such that
node $j$ is not activated after activating node $1$.  Remove all nodes in $\{1, \ldots, j-1\}$, decrement by $j-1$ the thresholds of all nodes $\geq j$, and solve the resulting graph recursively.  It is easy to see that this algorithm can be implemented in linear time,  in an iterative fashion as follows: we examine the nodes in order. When we process node $i$, if $t(i) < i$, we 
simply increment $i$ and continue; if $t(i) \geq i$,  we give $t(i) - i + 1$ links to node $i$
(note that we assume $t(i) \leq k + i - 1$, and so $t(i) - i + 1 \leq k$). We now show that the link set produced by this greedy algorithm is 
optimal. 

First we show that there must 
be an optimal solution that gives $t(1)$ links to the  node $1$. Consider an optimal solution
$S$ in which node $1$ gets $q < t(1)$ links.  We follow the chain of activations by $S$ to node $1$. Let $A_1$ be the set of nodes that is activated by external influence alone according to solution $S$, and suppose $A_1 \ac A_2 \ac \ldots  \ac A_j \ac 1$. Then $|A_1 \cup A_2 \cup \cdots \cup  A_j| \geq t(1) - q$. We construct a new solution $S'$ by moving
$t(1) -q$ links from some node $i \in A_1$ to node 1. Observe that $i$ received $t(i) \geq t(1)$ links in $S$, and so there are enough links to move. Furthermore, the nodes in $A_1' = A_1 - \{i \} \cup \{1\}$ are activated entirely by external influence in $S'$. Also  since $|A_1'| = |A_1|$, and all nodes are connected, $A_1' \ac A_2 \ldots \ac A_j$. Finally, together with the $t(i) - t(1) + q$ links that $S'$ gives to $i$, we have $A_1 \cup A_2 \cup A_j \ac i$. The rest of the activation proceeds in the same way in $S$ and $S'$. Since $S'$ is a solution of the same size as $S$ that gives $t(1)$ links to node $1$, the claim is proved.

Let $S_1 \subseteq S$ be the links given to node $1$ in $S$. Next we claim that 
$S - S_1$ is an optimal solution to the clique $C'$ which is the induced 
sub-graph on the nodes $\{j, j+1, \dots, n\}$ where $j>1$ is the smallest index 
with $t(j) \geq j$, and with thresholds of all nodes reduced by $j-1$.  Suppose 
there is a smaller solution $S'$ to $C'$. We claim that $S' \cup S_1$ 
activates all nodes in the clique $K_n$.  Since for any node $1 < k < j$, we have $t(k) < k$, it can be seen inductively that $S_1$  suffices to activate node $k$. Thus, all nodes in 
$\{1, 2, \ldots j-1 \}$ are activated. Furthermore, the thresholds of all nodes in $\{j, j+1, \dots, n\}$ 
are effectively reduced by 
$j-1$. Thus using the links in $S'$ suffices to activate them. Finally, since 
$|S'| < |S - S_1|$, $S' \cup S_1$ is a smaller solution than $S$ to the clique 
$K_n$, contradicting the optimality of $S$ for $K_n$. We conclude that the greedy algorithm described above 
produces a minimum sized solution to the {\sf Min-Links} problem.
\end{proof}

The following tight bound on the minimum number of links to activate an entire clique is immediate:

\begin{theorem}
Given a clique $K_n$ which has a feasible solution, let $P = \{i : t(i) \geq i\}$.
Then 
$$ML(K_n) = \sum_{i \in P} t(i) - i + 1.$$
\label{clique-ML}
\end{theorem}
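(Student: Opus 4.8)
The plan is to show that the greedy algorithm from Theorem~\ref{clique-algo} produces exactly $\sum_{i \in P} (t(i) - i + 1)$ links, where $P = \{i : t(i) \geq i\}$, and since that algorithm is optimal, this value equals $ML(K_n)$. First I would recall that the algorithm examines nodes in increasing order of threshold (after the counting sort), and when processing node $i$, it gives node $i$ exactly $t(i) - i + 1$ links precisely when $t(i) \geq i$, and gives node $i$ no links when $t(i) < i$. Thus the total number of links placed by the algorithm is $\sum_{i : t(i) \geq i} (t(i) - i + 1) = \sum_{i \in P} (t(i) - i + 1)$, which is exactly the claimed bound. Since Theorem~\ref{clique-algo} establishes that this greedy algorithm returns an optimal (minimum-sized) pervading link set, the value $ML(K_n)$ equals the size of the set it produces, giving the result.

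The one subtlety I would be careful about is that the greedy algorithm as described in Theorem~\ref{clique-algo} is recursive: it finds the smallest $j$ such that node $j$ is not activated after activating node $1$, removes nodes $\{1, \ldots, j-1\}$, decrements the thresholds of the remaining nodes by $j-1$, and recurses. I would verify that the recursive formulation and the iterative ``examine nodes in order'' formulation place the same number of links on each node. In the recursive version, after removing $\{1, \dots, j-1\}$ and decrementing thresholds, a surviving node $i$ (with original threshold $t(i)$) has new threshold $t(i) - (j-1)$ and new index $i - (j-1)$ within the reduced clique, so the quantity $t(i) - i + 1$ is invariant under this transformation. Hence whether a node receives links and how many it receives are determined by the original value $t(i) - i + 1$ and the sign of $t(i) - i$, confirming that the iterative description is faithful and that the total is as claimed. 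This is the only step requiring genuine checking; the rest is a direct reading off of the algorithm's behavior.

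Alternatively, and perhaps more cleanly, I would give a self-contained inductive argument paralleling the proofs of Theorem~\ref{tree-ML} and Theorem~\ref{cycle-ML}, rather than appealing to the iterative reformulation. Sort so that $t(1) \leq \cdots \leq t(n)$. By the claim proved inside Theorem~\ref{clique-algo}, there is an optimal solution giving $t(1)$ links to node $1$; let $j$ be the smallest index with $t(j) \geq j$ (equivalently the reduced clique $C'$ on $\{j, \ldots, n\}$ with thresholds lowered by $j-1$). Then $ML(K_n) = t(1) + ML(C')$, and I would apply the inductive hypothesis to $C'$. The key bookkeeping observation is that node $i \geq j$ belongs to $P$ for $K_n$ if and only if its image belongs to the corresponding set for $C'$, since $t(i) - i = (t(i) - (j-1)) - (i - (j-1))$, and its contribution $t(i) - i + 1$ is preserved. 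For nodes $1 \le i < j$ we have $t(i) < i$ so they contribute nothing to $P$ except node $1$ itself, whose contribution $t(1) - 1 + 1 = t(1)$ is precisely the $t(1)$ links charged at this level. Summing the base contribution $t(1)$ with the inductively computed $ML(C') = \sum_{i \in P \cap \{j,\ldots,n\}} (t(i) - i + 1)$ yields $\sum_{i \in P} (t(i) - i + 1)$.

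The main obstacle, such as it is, is purely the index-shifting bookkeeping when a block of low-threshold nodes is absorbed and the remaining thresholds and indices are renumbered; I expect no conceptual difficulty, only the need to check that $t(i) - i + 1$ and the membership condition $t(i) \geq i$ are both invariant under the simultaneous shift of threshold and index by $j-1$. Everything else follows immediately from the optimality already established in Theorem~\ref{clique-algo}, so the proof should be short.
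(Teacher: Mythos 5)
Your proposal is correct and matches the paper's intent exactly: the paper states this bound as an immediate consequence of the greedy algorithm and its optimality from Theorem~\ref{clique-algo}, which is precisely the reasoning you flesh out (including the index-shift invariance of $t(i)-i+1$ that makes the recursive and iterative formulations agree). No further comparison is needed.
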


\section{An algorithm for general graphs} \label{sec:TPI}
In this section we design an algorithm, that works for arbitrary graphs $G=(V,E)$,  to efficiently allocate
links to nodes in $V$ from a set of external nodes $U$ in such a way that, assuming that the nodes in $U$ are already activated, they trigger
an influence diffusion process that activates the whole network.
We assume $|U|\geq t_{max}$, where $t_{max}$ is the maximum threshold of nodes in $V$ so that there is always an feasible solution for the {\sf Min-Links} problem.  
Our procedure is formally presented in Algorithm \ref{alg2}.

The algorithm works by computing a link vector
$\bs = (s(v_1), \ldots , s(v_n))$, where 
$s(v)$ is an integer representing  the number of links between  external nodes in $U$ and  the vertex $v \in V$.
As observed in Section \ref{notation}, the $s(v)$ links to $v$ can be seen as a partial incentive to node $v$.
From any  link vector $\bs$,  one  can get  link sets $S$ between nodes in $U$ and nodes in $V$. 
In the following we say that a link vector $\bs$ is a pervading link 
vector when any   corresponding set $S$ is a {\em pervading link set}.

The algorithm proceeds by iteratively deleting nodes from the graph $G$, and at each iteration the node to be deleted is chosen to maximize a certain parameter (Case 2). 
If, during the deletion process, a node $v$ in the surviving graph  remains with less neighbors 
than its current threshold   (Case 1), then a set of links (or equivalently a partial incentive) is added to $v$ so that $v$'s new threshold is equal to the number of neighbours of $v$ 
in the surviving graph.

\medskip

In the sequel,  we denote by 
 $\Active(G,\bs,j)$  the set of  nodes that are active at step $j$ of the influence 
diffusion process on the network $G$ augmented with a set of links determined  by the link vector $\bs$. 
Namely, let $\Gamma_G(v)$   denote the neighborhood of
 $v $ in $G$, we have
$\Active(G,\bs,0) = \{v \ |\  s(v)\geq t(v)\}$ and 
$\Active(G,\bs,j)=\Active(G,\bs,j-1)\cup \{u\ |\ |\Gamma_G(v)\cap \Active(G,\bs,j-1)\geq t(u)-s(u)\}$, for all $j\geq 1$.
 The pseudocode for our algorithm is given in Algorithm TPI($G$).

\begin{figure}
\begin{center}
\includegraphics[width=8truecm]{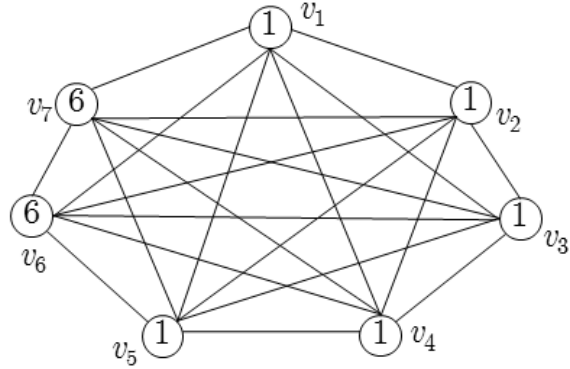}
\caption{An example to illustrate Algorithm TPI($G)$. The number inside each circle is the  node threshold.\label{fig:k}}
\end{center}
\end{figure}
\def\W {W}
\begin{algorithm}
\SetKwInput{KwData}{Input}
\SetKwInput{KwResult}{Output}
\DontPrintSemicolon
\caption{ \ \   \textbf{Algorithm} TPI($G$) \label{alg2}}
\KwData { A graph $G=(V,E,t)$ where $t$ is the threshold function on $V$. }
\KwResult{ $\bs = (s(v_1), \ldots  , s(v_n))$ a link vector for $G$, where $V = \{v_1, \ldots , v_n\}$.}
\setcounter{AlgoLine}{0}
$\W=V$; \\  
\ForEach {$v \in V $}{
	$\cs( v)=0;$  \tcp*[f]{\# of links  to $v$.} \\
  $\d( v)=d(v)$;\\ 
  $k( v)=t( v)$; \\
  $N( v)=\Gamma_G( v)$; 
}
\While(\tcp*[f]{Select one node and either update its links or remove it from the graph.}){ $\W\neq \emptyset$ }{
\eIf(\tcp*[f]{\underline{Case 1}: Increase $s(v)$ and update $k(v)$}.){ $\exists\ v\in \W$ s.t. $k(v)>\d(v)$}
		{
		$\cs(v)=\cs(v)+ k(v)-\d(v)$;\\
		$k(v)=\d(v)$;\\
		\If(\tcp*[f]{here\  $\d(v)=0$.}){k(v)=0}{$\W=\W-\{v\};$}
		}
		(\tcp*[f]{\underline{Case 2}: Choose a node $v$ to eliminate from the graph. })	
				{
    		$v={\tt argmax}_{u\in \W}\left\{\frac{k(u)(k(u)+1)}{\delta(u)(\delta(u)+1)}\right\}$;\\
				\ForEach{$u\in N(v)$}{
    						$\d(u)=\d(u)-1$;\\
								$N(u)=N(u)-\{v\};$
    				}
						$\W=\W-\{v\};$
			  }	
				\KwRet $\bs$
}
\end{algorithm}	
\begin{example}\label{ex-4a}
Consider a complete graph on 7 nodes with thresholds $t(v_1)=\ldots= t(v_{5})=1$,   $t(v_{6})=t(v_{7})=6$ (cf. Fig. \ref{fig:k}).
A possible execution of the algorithm is summarized below. 
At each iteration of the while loop, the algorithm   considers  the nodes  in the  order
shown in the table  below, where  we also 
 indicate for each node whether  Cases 1 or 2 applies and the updated value of the 
number of links for the selected node:

\begin{center}
\begin{tabular}{|l|l|l|l|l|l|l|l|l|}
\hline 
Iteration      & $1$    & $2$      & $3$      & $4$    & $5$    & $6$    & $7$    & $8$   \\
\hline 
Node      & $v_{7}$       & $v_{6}$     & $v_{6}$    & $v_1$     & $v_2$        & $v_3$          & $v_4$     & $v_5$ \\
\hline 
Case      & $2$    & $1$        & $2$      & $2$         & $2$      & $2$       & $2$    & $1$\\
\hline 
 Links    & $0$     & $1$       & $1$   
                         & $0$      & $0$     & $0$       &  $0$     &  $1$\\
 \hline 
\end{tabular}
\end{center}

The algorithm  $TPI(G)$  outputs  the  link vector $\bs$ having non zero elements
$s(v_5)=s(v_6)=1$, for which we have 
\begin{eqnarray*}
 \Active(G,\bs,0) &=& \{v_5\}  \qquad \mbox{\em (since $s(v_5)=1=t(v_5)$)} 
 \\
 \Active(G,\bs,1)&=&\Active(G,\bs,0)\cup\{v_1,v_2,v_3,v_4\}=\{v_1,v_2,v_3,v_4,v_5\}
 \\
 \Active(G,\bs,2)&=&\Active(G,\bs,1)\cup \{v_6   \} =\{v_1,v_2,v_3,v_4,v_5, v_6   \} \qquad \mbox{\em (since $s(v_6)=1$)}  
\\
 \Active(G,\bs,3)&=&\Active(G,\bs,2)\cup \{  v_7 \}   =V.
\end{eqnarray*}
\end{example}
\vspace*{-.7truecm}

We first prove the correctness and complexity of the algorithm.  Subsequently, we compute an  upper bound on the number of links 
in the pervarding link set,  that is, an upper bound on
$\sum_{v\in V}s(v)$.
To this end, we define the following notation. 

Let $\ell$ be the number of iterations of the while loop in TPI($G$).
For each iteration $j$ of the while loop, with $1 \leq j \leq \ell$,   we denote
\begin{itemize}
\item by $\W_j$ the set $\W$ 
at the beginning of the $j$-th iteration (cf. line 7 of $TPI(G)$), in particular $\W_1=V(G)$ and $\W_{\ell+1}=\emptyset$;
\item by $\G(j)$  the subgraph of $G$ induced by the vertices in $\W_j$;
\item by $\vj$ the node selected during the $j$-th iteration\footnote{A node can be  selected  several  times before being eliminated; indeed in Case 1  we can have $\W_{j+1}=\W_j$.};
\item by $\d_j(v)$  the degree of node $v$ in $\G(j)$;
\item by  $k_j(v)$  the value of the threshold of node $v$ in $\G(j)$, that is, as it is updated at 
the beginning of the $j$-th iteration, in particular $k_1(v)=t(v)$ for each $v\in V$;
\item by  $s_j(v)$  the number of links to  $v$  that are computed by the algorithm from  the  $j$-th  iteration until  and including the $\ell$-th iteration, in particular observe that   $s_{\ell+1}(v)=0$ and   $s_{1}(v)=s(v)$ for each $v\in V$;

\item by $\s_j$   the number of links assigned  {\em during} the $j$-th iteration, that  is,
$$\s_j=\cs_j(v_j)-\cs_{j+1}(v_j)=\begin{cases} {0}&{\mbox{if} \ k_j(\vj) \leq \d_j(\vj),}\\
                      {k_j(\vj) - \d_j(\vj)}&{\mbox{otherwise.}} 
				\end{cases}$$
\end{itemize}
According to  the above notation, we have that if node $v$ is  selected during the iterations
$j_1 < j_2 < \ldots < j_{a-1} < j_a$ of the while loop in TPI($G$),  where  the last value  $j_a$  is the iteration when  $v$ 
has been eliminated from the graph, then 
$$\cs_j(v)=\begin{cases}  {\s_{j_1}+\s_{j_2}+\ldots   + \s_{j_a}}&{\mbox{if} \  j \leq j_1,}\\
               {\s_{j_b} +  \s_{j_{b+1}}+\ldots  + \s_{j_a} } &{\mbox{if} \ j_{b-1} < j \leq j_b {\leq j_a},}\\
                       {0}&{\mbox{if} \ j > j_a.}
				\end{cases}$$
{In particular observe that for $j=j_a$, we have $\cs_j(v_j)=\s_j$.}		

The following result is immediate.											
\begin{prop}\label{prop1}
Consider  the node $\vj$ that is  selected during   iteration $j$, for $1 \leq j \leq \ell$, of the while loop in the algorithm TPI($G$):
\begin{itemize}
\item[ {\bf (1.1)}] If Case 1 of the algorithm 
TPI($G$) holds and  $\d_{j}(\vj) =0$, then 
 $k_{j}(\vj)> \d_{j}(\vj) =0$ and  the isolated node $\vj$ is eliminated from $\G(j)$.
Moreover,\\  
$$\W_{j+1}=\W_{j} - \{\vj\},   \quad
\cs_{j+1}(\vj)= \cs_{j}(\vj)-\s_j,   \quad
 \s_{j}= k_{j}(\vj)-\d_{j}(\vj)=k_{j}(\vj)>0, $$ 
and, for each  $v {\in} \W_{j+1}$
$$\cs_{j+1}(v)=\cs_{j}(v), \quad
 \d_{j+1}(v)= \d_{j}(v), \quad 
 k_{j+1}(v)= k_{j}(v).$$
\item[ {\bf  (1.2)}] If Case 1 of  TPI($G$) holds with  $\d_{j}(\vj) >0$, then 
 $k_{j}(\vj)> \d_{j}(\vj) >0$ and  no node is deleted from $\G(j)$, that is, $\W_{j+1}=\W_{j}$.
 Moreover, 
$$\s_{j}= k_{j}(\vj)-\d_{j}(\vj)>0$$ 
and, for each $v \in \W_{j+1}$
$$\d_{j+1}(v)= \d_{j}(v),\quad
\cs_{j+1}(v)=\begin{cases}  { \cs_{j}(\vj){-}\s_j }&{\mbox{if }  v=\vj}\\
													 {     \cs_{j}(v)        }&{\mbox{if } v \neq \vj }
													\end{cases}, \
\quad
 k_{j+1}(v)=\begin{cases}  {  \d_{j}(v)}&{\mbox{if} \ v=\vj}\\
													 {  k_{j}(v)   }&{\mbox{if} \  v \neq \vj}.
													\end{cases}												$$
\item[{\bf  (2)}] If Case 2 of TPI($G$) holds then 
 $k_{j}(\vj) \leq \d_{j}(\vj)$ and 
$\vj$ is pruned from $\G(j)$. Hence, 
$$\W_{j+1}=\W_{j} - \{\vj\}, \qquad \s_{j}=0,$$
  and,  for each $v \in \W_{j+1}$ it holds
$$
\cs_{j+1}(v)= \cs_{j}(v),  \quad  k_{j+1}(v)= k_{j}(v)		\quad 
\d_{j+1}(v)=\begin{cases}  {  \d_{j}(v) -1}&{\mbox{if} \ v \in \Gamma_{\G(j)}(\vj)}\\
													 { \d_{j}(v)   }&{\mbox{otherwise.} }
													\end{cases}
$$					
\end{itemize}
\end{prop}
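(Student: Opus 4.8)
The plan is to establish the proposition by a direct, line-by-line reading of the pseudocode of TPI($G$), matching each of the three cases (1.1), (1.2), (2) to the branch of the while loop that is executed during iteration $j$, and checking that the in-place updates of the program variables $s,k,\d,N,W$ agree with the snapshot quantities $\cs_j,k_j,\d_j,\s_j,\W_j$ defined just before the statement. Since each snapshot is by definition the value of the corresponding program variable at the \emph{beginning} of iteration $j$, the whole argument reduces to tracking how one pass through the loop body transforms those values, and then reading off the claimed identities. No inductive step is required: each assertion concerns a single transition from $j$ to $j+1$.

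First I would treat Case 1, which is entered precisely when some $v\in\W_j$ satisfies $k_j(v)>\d_j(v)$; since the selected node $\vj$ is such a node, we immediately get $k_j(\vj)>\d_j(\vj)$. The body adds $k_j(\vj)-\d_j(\vj)$ links to $\vj$ and resets its threshold to $\d_j(\vj)$, which is exactly the increment $\s_j=k_j(\vj)-\d_j(\vj)$, while no edge is touched, so every degree is preserved. The split into (1.1) and (1.2) is governed by the subsequent test $k(\vj)=0$: in subcase (1.1) we have $\d_j(\vj)=0$, hence after the reset $k(\vj)=0$ and $\vj$ is removed, giving $\W_{j+1}=\W_j-\{\vj\}$ and $\s_j=k_j(\vj)-0=k_j(\vj)>0$; in subcase (1.2) we have $\d_j(\vj)>0$, the node survives, $\W_{j+1}=\W_j$, and the only variables that change are $k(\vj)$, now equal to $\d_j(\vj)$, and the link count of $\vj$. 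Because $\cs_j$ counts the links added at iterations $\ge j$, removing the contribution of iteration $j$ yields $\cs_{j+1}(\vj)=\cs_j(\vj)-\s_j$ in both subcases, while every $v\neq\vj$ in $\W_{j+1}$ is left untouched, so $\cs_{j+1}(v)=\cs_j(v)$, $\d_{j+1}(v)=\d_j(v)$, and $k_{j+1}(v)=k_j(v)$, as claimed.

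Next I would treat Case 2, which is entered exactly when no vertex of $\W_j$ satisfies $k(v)>\d(v)$; hence the chosen maximizer $\vj$ obeys $k_j(\vj)\le\d_j(\vj)$. This branch assigns no links, so $\s_j=0$ and $\cs_{j+1}(v)=\cs_j(v)$ for every surviving $v$; it deletes $\vj$, giving $\W_{j+1}=\W_j-\{\vj\}$, and decrements $\d(u)$ for each $u\in\Gamma_{\G(j)}(\vj)$ while leaving the degrees of non-neighbours unchanged, which is precisely the piecewise formula for $\d_{j+1}$. Thresholds of surviving nodes are never modified in this branch, so $k_{j+1}(v)=k_j(v)$.

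The argument carries no genuine obstacle beyond careful bookkeeping, which is why the result is stated as immediate; the only point deserving attention is the suffix-sum convention for $\cs_j$ and $\s_j$ (links counted from iteration $j$ onward), together with the fact, flagged in the footnote, that one vertex may be selected in several iterations---repeatedly by Case 1 as its degree drops, and finally either removed by Case 1.1 when its degree reaches $0$ or pruned by Case 2. One should check that the per-iteration identity $\s_j=\cs_j(\vj)-\cs_{j+1}(\vj)$ is consistent with the telescoping expression for $\cs_j(v)$ stated just above the proposition; this is immediate once one notes that each Case 2 selection contributes $\s_j=0$ to that suffix sum and each Case 1 selection contributes its positive increment $k_j(\vj)-\d_j(\vj)$.
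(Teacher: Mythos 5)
Your proposal is correct and matches the paper's treatment: the paper states this proposition as immediate from the pseudocode, and your case-by-case bookkeeping of the loop body (including the suffix-sum convention for $\cs_j$ and the identity $\s_j=\cs_j(\vj)-\cs_{j+1}(\vj)$) is exactly the verification the paper leaves implicit. No gaps.
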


\begin{lemma} \label{d=0}
For each iteration $j=1,2,\ldots,\ell$, of the while loop in the algorithm TPI($G$), 
\begin{itemize}
\item[(1)] \ if  $\ k_j(\vj) > \d_j(\vj)$ then $\s_j= k_{j}(\vj)-\d_{j}(\vj)=1$;
\item[(2)] \ if $\d_j(\vj)=0$ then $s_j(\vj)=k_j(\vj)$.
\end{itemize}
\end{lemma}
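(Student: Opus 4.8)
The plan is to handle the two parts by different means: part~(2) drops out of Proposition~\ref{prop1} almost immediately, while part~(1) I would reduce to a single invariant about the surviving graph and then prove that invariant by induction on the iteration count.

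Part~(2) first. If $\d_j(\vj)=0$ then, since thresholds are positive, $k_j(\vj)\ge 1>0=\d_j(\vj)$, so $\vj$ is selected through Case~1 on an isolated vertex and Proposition~\ref{prop1}(1.1) applies. It gives $\s_j=k_j(\vj)-\d_j(\vj)=k_j(\vj)$ together with $\W_{j+1}=\W_j-\{\vj\}$. Once $\vj$ has been deleted it can receive no further links, so $s_{j+1}(\vj)=0$; combining this with the defining relation $\s_j=s_j(\vj)-s_{j+1}(\vj)$ yields $s_j(\vj)=\s_j=k_j(\vj)$, which is exactly part~(2).

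For part~(1) I would prove the invariant that at the beginning of every iteration $j$ each surviving vertex $v\in\W_j$ satisfies $k_j(v)\le\d_j(v)+1$. Granting the invariant, part~(1) is one line: when Case~1 fires at the selected vertex $\vj$ with $\d_j(\vj)>0$, the hypothesis $k_j(\vj)>\d_j(\vj)$ combined with $k_j(\vj)\le\d_j(\vj)+1$ forces $k_j(\vj)=\d_j(\vj)+1$, hence $\s_j=k_j(\vj)-\d_j(\vj)=1$. The base case $j=1$ reads $t(v)\le d(v)+1$, and this is the one spot where the shape of the input actually matters: the statement is only correct once no vertex starts out exceeding its degree by more than one, so I would make this hypothesis explicit (it is consistent with the regime in which the claimed bound $\sum_{v}\frac{t(v)(t(v)+1)}{2(d_G(v)+1)}$ is the interesting one).

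The real content is the inductive maintenance, and the two branches behave oppositely. In Case~1 the algorithm resets the chosen vertex to $k_{j+1}(\vj)=\d_j(\vj)=\d_{j+1}(\vj)$ (or deletes it when $\d_j(\vj)=0$) and touches nothing else, so the invariant survives trivially. The delicate branch is Case~2: it is reached only through the else-branch of the test, i.e. exactly when \emph{no} surviving vertex has $k>\d$, so every $v\in\W_j$ then satisfies $k_j(v)\le\d_j(v)$; deleting $\vj$ lowers the degree of each of its neighbours by \emph{exactly} one and leaves all thresholds fixed (Proposition~\ref{prop1}(2)), so a neighbour $u$ moves from $k_j(u)\le\d_j(u)$ to $k_{j+1}(u)\le\d_{j+1}(u)+1$, while non-neighbours are unchanged. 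Thus no vertex ever accumulates an excess of $2$. I expect the main obstacle to be precisely this interlocking bookkeeping: one must use that Case~2 is entered only when all excesses are non-positive and that it decrements degrees one unit at a time, so that a vertex's excess is always intercepted by Case~1 at the value $1$ before it can grow, despite the fact (noted in the algorithm's footnote) that a single vertex may be selected repeatedly, alternating between Case~1 resets and Case~2 degree drops, before it is finally removed.
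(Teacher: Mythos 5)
Your proof is correct and follows essentially the same route as the paper's: both arguments rest on the facts that degrees decrease only in Case~2 and only by one unit per iteration, that Case~2 is executed only when every surviving vertex satisfies $k\le\delta$, and that Case~1 resets the selected vertex's excess to zero — you package this as an explicit invariant $k_j(v)\le\delta_j(v)+1$ maintained by induction, whereas the paper tracks a single vertex from the first iteration at which its excess becomes positive; part~(2) is handled identically in both. Your explicit flagging of the base-case hypothesis $t(v)\le d_G(v)+1$ is apt: the paper's proof opens by asserting $t(u)\le d(u)$ for all $u$, an assumption on the input that is stated neither in the lemma nor in the description of TPI($G$), and without some such hypothesis part~(1) is actually false.
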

\noindent
{\proof
First, we prove (1).
At the beginning of the algorithm, $t(u)=k(u)\le d(u)=\d(u)$ for all  $u\in V$. 
Afterwards, the value of $\d(u)$ is decreased by at most one unit for each iteration 
(cf. line 16 of \textit{TPI}($G$)). Moreover, 
the first time the condition of Case 1 holds for some node $u$, one has 
$\d_j(u)=k_j(u)-1$. Hence, if the selected node is $v_j=u$ then (1) holds; otherwise, some $v_j \neq u$, satisfying the condition of Case 1 is selected  and $\d_{j+1}(u)=\d_{j}(u)$ and $k_{j+1}(u)=k_{j}(u)$ hold.
Hence, when at some subsequent iteration $j'>j$ the algorithm selects $v_{j'} = u$, 
we have $\d_{j'}(u)=k_{j'}(u)-1$.
\\
To show  (2), it is sufficient to notice that at the iteration $j$ when node $v_j$ 
is eliminated from the graph,  $\cs_j(v)=\s_j$.
}\qed
\smallskip

We are now ready to prove the correctness and complexity of the algorithm TPI($G$):

\begin{theorem}\label{teo-att-par}
For any  graph $G=(V,E,t)$ and a set $U$ of external influencers such that $|U| \geq max_{v \in V} t(v)$,  the algorithm TPI($G$) outputs a pervading link vector for $G$ in time $O(|E|\log |V|)$.
\end{theorem}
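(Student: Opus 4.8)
The plan is to prove two separate claims: correctness (that the output link vector $\bs$ is pervading, i.e. $I(G,\bs)=V$) and the running time bound $O(|E|\log|V|)$. For correctness, I would argue by a reverse induction on the iteration index, showing that the nodes are activated in the \emph{reverse} of the order in which the algorithm eliminates them. The key structural observation, supplied by Proposition~\ref{prop1} and Lemma~\ref{d=0}, is that whenever a node $v_j$ is pruned in Case~2, its current threshold satisfies $k_j(v_j)\le \d_j(v_j)$, meaning all of $v_j$'s remaining neighbors in $\G(j)$ suffice to activate it; and whenever links are added in Case~1, they raise $s(v_j)$ exactly enough to compensate for the deficit $k_j(v_j)-\d_j(v_j)$, so that after the update $v_j$'s effective threshold again equals its current degree. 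First I would set up the invariant that for each $j$, the effective threshold of every surviving node (original threshold minus links assigned so far, i.e. $k_j(v) = t(v) - (s(v)-s_j(v))$-type bookkeeping already recorded in the excerpt) never exceeds its degree $\d_j(v)$ in the surviving graph $\G(j)$.

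The heart of the correctness argument is then to run the diffusion process and show that nodes activate in reverse elimination order. Concretely, let $v_{j_1}, v_{j_2}, \ldots$ list the nodes in the order they are \emph{removed} from $W$ (the $j_a$-type iterations), and I would show by induction on the reversed list that once all nodes removed after $v$ are active, node $v$ becomes active. For the last-removed node this holds because at its removal either $\d=0$ and it received $s=k=t$ links (Lemma~\ref{d=0}(2)), activating it from external influence alone, or it was removed in Case~2 with its full neighborhood available. For an earlier-removed node $v$ eliminated at iteration $j$, all of its neighbors in $\G(j)$ are nodes removed at iterations $>j$, hence already active by the reverse induction; since $k_j(v)\le\d_j(v)$ after any Case~1 top-up, these $\d_j(v)$ active neighbors meet $v$'s reduced threshold and activate it. Summing, every node is eventually activated, so $\bs$ is pervading. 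I should also invoke the assumption $|U|\ge \max_v t(v)$ to guarantee the assigned links $s(v)\le t(v)\le|U|$ can actually be realized by distinct external influencers, so that the link vector corresponds to a genuine link set.

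For the running time, the plan is to maintain the candidate set $W$ together with the quantities $k(v)$ and $\d(v)$ in a priority queue keyed by the ratio $\tfrac{k(v)(k(v)+1)}{\d(v)(\d(v)+1)}$ that governs the Case~2 \texttt{argmax}. Each Case~2 elimination removes one node and, via the \texttt{foreach} over $N(v)$, decrements $\d(u)$ for each neighbor $u$, triggering an update of $u$'s key; over the whole run the total number of such neighbor-updates is $\sum_v d(v)=2|E|$. A Case~1 step changes only the selected node's key. Using a binary heap each of the $O(|E|)$ key updates and each of the $O(|V|)$ extractions costs $O(\log|V|)$, giving the claimed $O(|E|\log|V|)$ total. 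I would note that the number of iterations $\ell$ is $O(|V|+|E|)$ (each node is removed once, and Case~1 can fire at most once per unit of degree decrement before removal), which does not affect the heap-dominated bound.

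The main obstacle I anticipate is the correctness direction, specifically pinning down the reverse-order activation cleanly: one must verify that at the moment $v$ is eliminated, \emph{every} surviving neighbor of $v$ is eliminated strictly later (immediate, since they are still in $W$ at iteration $j$), and that the effective threshold recorded in $k_j(v)$ after the Case~1 adjustment genuinely equals the number of those still-surviving neighbors that will activate $v$. The bookkeeping connecting the algorithm's running variables $s(v), k(v), \d(v)$ to the diffusion-process quantities $I(G,\bs,\cdot)$ — in particular confirming that links added in Case~1 to an already-processed-looking node are correctly accounted when that node later reaches $\d=0$ — is where I would be most careful, leaning on the explicit identities for $s_j(v)$ and $\s_j$ catalogued in Proposition~\ref{prop1}.
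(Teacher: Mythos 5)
Your proposal is correct and follows essentially the same route as the paper: a backward induction over the iterations (the paper phrases it as ``the links $s_j(\cdot)$ activate all of $\G(j)$ under thresholds $k_j(\cdot)$,'' you phrase it as activation in reverse elimination order, which is the same inductive step), using exactly the facts from Proposition~\ref{prop1} and Lemma~\ref{d=0} that Case~2 nodes satisfy $k_j(v_j)\le\d_j(v_j)$ and Case~1 tops up the deficit by one, together with the identical max-heap accounting ($O(|E|)$ iterations and key updates, $O(\log|V|)$ each) for the running time. No gaps worth flagging.
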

\proof
We  show that for  each iteration $j$, with $1 \leq j \leq \ell$, the assignment of
$s_j(v)$  links for each $v \in \W_j$ activates all the nodes of the graph $\G(j)$ when the distribution of  thresholds to its nodes is $k_j(\cdot)$.
The proof is by induction on $j$.

\smallskip\noindent
If $j=\ell$ then  the unique node $v_\ell$ in $\G(\ell)$  has  degree $\d_\ell(v_\ell)=0$  and $s_\ell(v_\ell)=k_\ell(v_\ell)=1$ (see Lemma \ref{d=0}).

\smallskip\noindent
Consider now $j< \ell$ and suppose the algorithm is correct on $\G(j+1)$ that is,  the assignment of $s_{j+1}(v)$ links to each $v \in \W_{j+1}$, activates all the nodes of the graph $\G({j+1})$  when the distribution of thresholds to its nodes is $k_{j+1}(\cdot)$.  \\
Recall that $v_{j}$ denotes  the node the algorithm selects from  $\W_j$ (thus  obtaining $\W_{j+1}$,  the node set of $\G(j+1)$).
In order to prove the theorem we analyze  three cases according to the current degree and threshold of the selected node $v_{j}$.
\begin{itemize}
\item $k_{j}(\vj)> \d_{j}(\vj) =0$:  By Lemma \ref{d=0},  we have $k_j(\vj)=s_j(\vj)$. Then the correctness of the algorithm on $\G(j)$ follows from part (1.1) of Proposition \ref{prop1} and  the inductive hypothesis on $\G(j+1)$.
\item $k_{j}(\vj)> \d_{j}(\vj) \geq 1$:  From (1.2) of Proposition \ref{prop1},  we observe that
 $k_j(v)- s_j(v)=k_{j+1}(v)-s_{j+1}(v)$, for each node $v \in \W_j$. 
{Indeed, for each $v \neq v_j$ we have $k_{j+1}(v)=k_j(v)$ and $s_{j+1}(v)=s_{j}(v)$. 
Moreover, $$k_{j+1}(\vj)-s_{j+1}(\vj)= \d_j(\vj)-s_{j}(\vj)+\s_j=k_{j}(\vj)-s_{j}(\vj).$$}
   Hence the
nodes that can be activated in  $\G(j+1)$  can be activated in  $\G(j)$ with thresholds 
$k_{j}(\cdot)$ and $s_{j}(\cdot)$ links. 
So, the correctness of the algorithm on  $\G(j)$ follows from the inductive hypothesis on $\G(j+1)$.
\item $k_{j}(\vj) \leq \d_{j}(\vj)$: From part  (2) of Proposition \ref{prop1}, and by  the inductive hypothesis on  $\G(j+1)$, we have that all
the neighbors of $\vj$ in $\G(j)$ that are nodes in $\W_{j+1}$ gets activated; 
since $k_j(\vj) \leq \d_j(\vj)$, we conclude that $\vj$ also gets activated in $\G(j)$.
\end{itemize}

This completes the proof by induction. To prove the time complexity, we first  notice that the algorithm ends within $|E|$ iterations of the {\em while} loop.
Indeed,  each time  a vertex $v$ is selected in Case 1 of the algorithm, 
its current  threshold   $k(v)$ is decreased to the current node degree $\delta(v)$. 
Moreover,    if $k(v)$ reaches 0  then $v$ is deleted.
This implies that  each vertex v  can be selected up to $d(v)$ times  before being deleted
 (because its threshold gets down to  0 or because Case 2 is applied  to $v$).
Hence,  the algorithm  executes  at most $|E|$ iterations of the while loop.

When Case 2 applies,  one has to select the node  $u$ in the current node set
that maximizes   the quantity  $\frac{k(u)(k(u)+1)}{\delta(u)(\delta(u) + 1)}$. 
Suppose that  the nodes   are initially sorted in a max-heap 
according to the priorites  $\frac{k(u)(k(u)+1)}{\delta(u)(\delta(u) + 1)}$. 
Each time the node priority  $\frac{k(u)(k(u)+1)}{\delta(u)(\delta(u) + 1)}$ changes, we can  update the 
heap in $O(\log |V|)$ time.
For each iteration, if  $v$ is  the processed node then  the 
update involves  only the  neighbors of $v$ if 
Case 2 applies, and only the node  $v$  itself if Case 1 applies.
Overall,  Case 1 requires  at most $|E|\cdot O(\log |V|)$  time and   Case 2 
requires  at most $\sum_{v\in V} d(v) \cdot O(\log |V|)$ times.

Hence,  the  algorithm can be implemented in such a way to run
in  $O(|E|\log |V|)$ time.
\qed

\medskip
\noindent

We now compute an upper bound on the number of links that the link vector returned by the algorithm TPI assigns to the vertices in $V$.
\begin{theorem}\label{teo3}
For any  graph  $G$ the algorithm TPI($G$) returns a link vector $\bs$ for $G$ such that
$$\sum_{v\in V}\cs(v)\leq \sum_{v\in V}  \frac{t(v)(t(v) +1)}{2(d_G(v)+1)}.$$
\end{theorem}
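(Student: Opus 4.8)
The plan is to analyze the algorithm with a single potential function and a telescoping argument. For each iteration $j$, $1\le j\le \ell$, I would set
\[
\Phi_j \;=\; \sum_{v\in \W_j}\frac{k_j(v)\,(k_j(v)+1)}{2\,(\d_j(v)+1)} .
\]
Since $\W_1=V$, $k_1(v)=t(v)$ and $\d_1(v)=d_G(v)$, we have $\Phi_1=\sum_{v\in V}\frac{t(v)(t(v)+1)}{2(d_G(v)+1)}$, which is exactly the claimed upper bound, and $\Phi_{\ell+1}=0$ because $\W_{\ell+1}=\emptyset$. Moreover, by Lemma~\ref{d=0}(1) every Case~1 iteration assigns $\s_j=1$ link and every Case~2 iteration assigns $\s_j=0$, so $\sum_{v\in V}\cs(v)=\sum_{j=1}^{\ell}\s_j$ equals the number of Case~1 iterations. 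Hence it suffices to show that each Case~1 iteration decreases $\Phi$ by exactly $1$ and that no Case~2 iteration increases $\Phi$; telescoping $\Phi_1-\Phi_{\ell+1}$ then gives $\sum_{v}\cs(v)\le \Phi_1$.

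First I would treat Case~1. By Lemma~\ref{d=0}(1) the selected node $\vj$ has $k_j(\vj)=\d_j(\vj)+1$, and the step replaces its threshold by $\d_j(\vj)$ while leaving all degrees unchanged (Proposition~\ref{prop1}(1.2)); in the sub-case $\d_j(\vj)=0$ the node is additionally deleted, but it is isolated, so no other term of $\Phi$ is affected (Proposition~\ref{prop1}(1.1)). Writing $\d=\d_j(\vj)$, the contribution of $\vj$ drops from $\frac{(\d+1)(\d+2)}{2(\d+1)}=\frac{\d+2}{2}$ to $\frac{\d(\d+1)}{2(\d+1)}=\frac{\d}{2}$, a decrease of exactly $1$, matching the single link added.

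The crux is Case~2, where $\s_j=0$ and I must show $\Phi_{j+1}\le\Phi_j$. Here $\vj$ is pruned and each of its $\d_j(\vj)$ neighbours $u$ has its degree reduced by one while its threshold is unchanged (Proposition~\ref{prop1}(2)). Deleting $\vj$ decreases $\Phi$ by $\frac{k_j(\vj)(k_j(\vj)+1)}{2(\d_j(\vj)+1)}$, and each neighbour $u$ increases $\Phi$ by
\[
\frac{k_j(u)(k_j(u)+1)}{2\,\d_j(u)}-\frac{k_j(u)(k_j(u)+1)}{2\,(\d_j(u)+1)}
=\frac{1}{2}\cdot\frac{k_j(u)(k_j(u)+1)}{\d_j(u)(\d_j(u)+1)} .
\]
Since the algorithm chooses $\vj$ to maximise $\frac{k(u)(k(u)+1)}{\d(u)(\d(u)+1)}$ over $\W_j$, every neighbour obeys $\frac{k_j(u)(k_j(u)+1)}{\d_j(u)(\d_j(u)+1)}\le\frac{k_j(\vj)(k_j(\vj)+1)}{\d_j(\vj)(\d_j(\vj)+1)}$. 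Summing the increases over the $\d_j(\vj)$ neighbours is therefore at most $\frac{\d_j(\vj)}{2}\cdot\frac{k_j(\vj)(k_j(\vj)+1)}{\d_j(\vj)(\d_j(\vj)+1)}=\frac{k_j(\vj)(k_j(\vj)+1)}{2(\d_j(\vj)+1)}$, which exactly cancels the decrease from deleting $\vj$; hence $\Phi_{j+1}\le\Phi_j$. Combining the two cases and telescoping yields $\sum_{v\in V}\cs(v)=\#\{\text{Case~1 iterations}\}\le \Phi_1-\Phi_{\ell+1}=\Phi_1$, as required. The hard part is precisely this Case~2 estimate: it is where the seemingly mysterious selection rule $\frac{k(u)(k(u)+1)}{\d(u)(\d(u)+1)}$ earns its keep, since its maximality, together with the fact that exactly $\d_j(\vj)$ neighbours gain potential, makes the neighbour increases cancel the removal term. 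A minor point I would verify along the way is that the priorities are always well defined in Case~2 (a surviving node of degree $0$ still has positive threshold and so triggers Case~1 first, and is therefore never selected in Case~2).
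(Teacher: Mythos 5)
Your proof is correct and follows essentially the same route as the paper: you use the identical potential function $\sum_{v\in \W_j}\frac{k_j(v)(k_j(v)+1)}{2(\d_j(v)+1)}$, show each Case~1 step decreases it by exactly the one link assigned (via Lemma~\ref{d=0}) and each Case~2 step does not increase it (via the argmax inequality), and telescope. The only differences are presentational (you merge the two Case~1 sub-cases into one computation and phrase the bound per-case rather than as $\s_j\le \D(j)-\D(j+1)$), and your closing remark about degree-$0$ nodes never reaching Case~2 is a correct, worthwhile sanity check.
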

\proof
Define  $\D(j)=\sum_{v \in \W_j} \frac{k_j(v)(k_j(v)+1)}{2(\d_j(v)+1)}$,
for each $j=1,\ldots, \ell$. 
By definition of $\ell$, we have $\G(\ell+1)$ is the empty graph;  we then define $\D(\ell+1)= 0$.
We prove now by induction on $j$ that  
\begin{equation}\label{S}
\s_j \leq \D(j) - \D(j+1).
\end{equation}
By using (\ref{S}) we will have the bound on $\sum_{v \in V}\cs(v)$. Indeed, 
$$\sum_{v \in V}\cs(v) = \sum_{j=1}^{\ell}\s_j \leq \sum_{j=1}^{\ell} (\D(j) - \D(j+1)) 
=\D(1)-\D(\ell+1) = \D(1) =  \sum_{v\in V}  
\frac{t(v)(t(v) +1)}{2(d(v)+1)}.$$
\\
In order to prove (\ref{S}), we analyze  three cases depending on {the relation between $k_{j}(\vj)$ and $\d_{j}(\vj)$}.\\
$\bullet$ Assume first $k_{j}(\vj)> \d_{j}(\vj) =0$. We get
\begin{eqnarray*}
\D(j)- \D(j{+}1) &=& \sum_{v\in \W_j} \frac{k_j(v)(k_j(v)+1)}{2(\delta_j(v)+1)}
     -\sum_{v\in \W_{j+1}} \frac{k_{j+1}(v)(k_{j+1}(v)+1)}{2(\delta_{j+1}(v)+1)}
\nonumber\\
&=& \frac{k_j(\vj)(k_j(\vj)+1)}{2(\delta_j(v_j)+1)} + \sum_{v\in \W_j -\{v_j\}} \frac{k_j(v)(k_j(v)+1)}{2(\delta_j(v)+1)}\\
     && \hphantom{aaaaaaaaaaaaaaa} - \sum_{v\in \W_{j+1}} \frac{k_{j+1}(v)(k_{j+1}(v)+1)}{2(\delta_{j+1}(v)+1)}
\nonumber\\
&=& 
\frac{k_j(\vj)(k_j(\vj)+1)}{2(\delta_j(v_j)+1)}  \qquad \qquad\mbox{{(by 1.1 in Proposition \ref{prop1})}}\\ 
&=& 1 = \s_j. \qquad \qquad\qquad \qquad\qquad\mbox{{(by Lemma \ref{d=0})}}
\end{eqnarray*}

\noindent
$\bullet$ Let now $k_{j}(\vj)> \d_{j}(\vj) \geq 1$. We have
\begin{eqnarray*}
\D(j)- \D(j{+}1)&=& \sum_{v\in \W_j} \frac{k_j(v)(k_j(v)+1)}{2(\delta_j(v)+1)}
     -\sum_{v\in \W_{j+1}} \frac{k_{j+1}(v)(k_{j+1}(v)+1)}{2(\delta_{j+1}(v)+1)}
\nonumber\\
&=& \frac{k_j(\vj)(k_j(\vj){+}1)}{2(\delta_j(v_j)+1)} -\frac{k_{j+1}(\vj)(k_{j+1}(\vj){+}1)}{2(\delta_{j+1}(v_j)+1)}
\nonumber\\
&& +  \sum_{v\in \W_j -\{v_j\}} \frac{k_j(v)(k_j(v)+1)}{2(\delta_j(v)+1)}
     -\sum_{v\in \W_{j+1} -\{v_j\}} \frac{(k_{j+1}(v)(k_{j+1}(v)+1)}{2(\delta_{j+1}(v)+1)}
\nonumber\\
&=& 
\frac{(\delta_j(v_j)+1)(\delta_j(v_j)+2)}{2(\delta_j(v_j)+1)} - \frac{\delta_j(v_j)(\delta_j(v_j)+1)}{2(\delta_j(v_j)+1)}
\nonumber \ \  \mbox{{(by 1.2 in Proposition \ref{prop1})}}\\
&=& 
\frac{2(\delta_j(v_j)+1)}{2(\delta_j(v_j)+1)} = 1 = \s_j.  \qquad \qquad\qquad \qquad\qquad\mbox{{(by Lemma \ref{d=0})}}
\end{eqnarray*}

\noindent
$\bullet$ Finally, let $k_{j}(\vj)  \leq \d_{j}(\vj)$. 
In this case, by the algorithm  we know that   

\begin{equation} \label{argmaxEq}
\frac{k_j(v)(k_j(v)+1)}{\delta_j(v)(\delta_j(v)+1)} \leq \frac{k_j(\vj)(k_j(\vj)+1)}{\delta_j(\vj)(\delta_j(\vj)+1)},
\end{equation}
for each $v\in  \W_j$.
Hence, we get
\begin{eqnarray*}
\D(j)-\D(j+1)  &=& \sum_{v\in \W_j} \frac{k_j(v)(k_j(v)+1)}{2(\delta_j(v)+1)}
     -\sum_{v\in \W_{j+1}} \frac{k_{j+1}(v)(k_{j+1}(v)+1)}{2(\delta_{j+1}(v)+1)} \\
&=& \frac{k_j(\vj)(k_j(\vj){+}1)}{2(\delta_j(v_j){+}1)} 
+  \sum_{v\in \Gamma_{\G(j)}(v_j)} \frac{k_j(v)(k_j(v){+}1)}{2(\delta_j(v){+}1)}\\
&& \hphantom{a}
     -\sum_{v\in \Gamma_{\G(j)}(v_j) } \frac{k_{j+1}(v)(k_{j+1}(v){+}1)}{2(\delta_{j+1}(v){+}1)} 
		               \qquad\qquad \mbox{(by 2 in Proposition \ref{prop1})}
\\
&=& 
 \frac{k_j(\vj)(k_j(\vj){+}1)}{2(\delta_j(v_j){+}1)} 
+  \sum_{v\in \Gamma_{\G(j)}(v_j)} \frac{ k_j(v)(k_j(v){+}1)}{2}\left( \frac{1}{(\delta_j(v)+1)}
     - \frac{1}{\delta_{j}(v)} \right)
\\
&=& 
 \frac{k_j(\vj)(k_j(\vj)+1)}{2(\delta_j(v_j)+1)} 
-  \sum_{v\in \Gamma_{\G(j)}(v_j)} \frac{k_j(v)(k_j(v)+1)}{2\delta_{j}(v)(\delta_j(v)+1)}
\\
&\geq & 
          \frac{k_j(\vj)(k_j(\vj)+1)}{2(\delta_j(v_j)+1)} 
-   \frac{k_j(v_j)(k_j(v_j)+1)\delta_{j}(v_j)}{2\delta_{j}(v_j)(\delta_j(v_j)+1)} \ \ \qquad	\mbox{(by (\ref{argmaxEq}))}\\ 
&=& 0 = \s_j
\qquad\qquad\qquad\qquad\qquad\qquad\qquad\qquad\qquad\qquad\qquad\qquad \Box
\end{eqnarray*}

\medskip

Experimental data showing the effectiveness of   algorithm TPI$(G)$ on real-life networks are given in \cite{Sirocco15},
 as well as proofs of its optimality in case of cliques and trees. However, the  time complexity does not becomes linear as for the algorithms given in Section \ref{Sec4}.

\section{Discussion}

In this paper, we introduced and studied the {\sf Min-Links} problem: given a social network $G$ where every node $v$ has a threshold $t(v)$ to be activated,  which minimum-sized set of  nodes should an already activated set of external influencers 
$S$ befriend, so as to influence the entire network? We showed that the problem  cannot be approximated to within a ratio 
of  $O(2^{\log^{1-\epsilon} n})$,  for any fixed $\epsilon>0$, 
unless $NP\subseteq DTIME(n^{polylog(n)})$. 
 In contrast, we gave exact  linear time algorithms that  solve the problem in  trees, 
cycles, and cliques, for any given set of $k$ external influencers. We also  
 gave an exact bound (as a function of the thresholds) on the number of links needed for such graphs. 
Moreover, we gave a polynomial time  algorithm that solves the problem 
 in  general graphs and derived an upper bound on the number of links used by the algorithm.
It would be interesting to generalize  these algorithms to find the minimum number of links required to influence a specified fraction of the nodes. Other directions include studying the case with non-uniform weights on the edges. Clearly, the problem remains NP-complete in general, but the complexity for special classes of graphs remains open. Another interesting question is that of maximizing the number of activated nodes, given a fixed budget of $\ell$ links.

\end{document}